\def\Tr{{\rm{tr}}}
\def\>{\rangle}
\def\<{\langle}
\def\hc{^{\dagger}}
\renewcommand{\emph}{\textit}
\newtheorem{theorem}{Theorem}
\newtheorem{lemma}[theorem]{Lemma}
\newtheorem{definition}[theorem]{Definition}
\newtheorem{corollary}[theorem]{Corollary}
\newenvironment{customthm}[1]{\innercustomthm}{\endinnercustomthm}
\begin{document}
\title{ Efficient recovery of variational quantum algorithms landscapes using classical signal processing}
\author{Enrico Fontana$^{1,2,3}$}
\author{Ivan Rungger$^3$}
\author{Ross Duncan$^{1,2,4}$}
\author{Cristina C\^{i}rstoiu$^{1}$}

\affiliation{$^1$Quantinuum, Terrington House, 13-15 Hills Road, Cambridge CB2 1NL, UK}
\affiliation{$^2$Department of Computer and Information Sciences, University of Strathclyde, 26 Richmond Street, Glasgow G1 1XH, UK}
\affiliation{$^3$National Physical Laboratory, Hampton Road, Teddington TW11 0LW, UK}
\affiliation{$^4$Department of Physics and Astronomy, UCL, Gower Street, London, WC1E 6BT, UK}

\begin{abstract}
We employ spectral analysis and compressed sensing to identify settings where a variational algorithm's cost function can be recovered purely classically or with minimal quantum computer access. We present theoretical and numerical evidence supporting the viability of sparse recovery techniques. To demonstrate this approach, we use basis pursuit denoising to efficiently recover simulated Quantum Approximate Optimization Algorithm (QAOA) instances of large system size from very few samples. Our results indicate that sparse recovery can enable a more efficient use and distribution of quantum resources in the optimisation of variational algorithms. 
\end{abstract}

\maketitle

\section{Introduction}
As the available quantum hardware starts to reach scales where quantum advantage may be possible \cite{arute2019quantum}, it becomes crucial to understand under what conditions the heuristic algorithms employed in the NISQ regime can beat classical methods.  
A central object in a variational quantum algorithm (VQA) is the cost function, which may be seen as a parameterised expectation value encoding the problem of interest. For example, the variational quantum eigensolver~\cite{peruzzo2014variational} (VQE) finds the ground state energy by minimising the Hamiltonian's expectation value. 

Performance guarantees for VQAs face a series of hurdles. If the cost function can be efficiently classically estimated then there is no need for a quantum computer.    The complexity of this mean value task was explored in \cite{bravyi2021classical}, which found a $\mathcal{O}(N)$ classical algorithm for constant-depth 2D circuits on $N$ qubits. Furthermore, the parametrised circuits also need to be expressive enough to obtain the desired accuracy \cite{holmes2022connecting}. To avoid classical simulability and achieve expressibiliy, this requires scaling the number of parameters and circuit depth with the problem size -- which in turn places stronger constraints on the cost function optimisation.

Global features of the landscape \cite{arrasmith2021equivalence, holmes2022connecting} can impact the training of parametrised circuits. The clasical optimisation of cost functions used in VQAs  has been shown to be computationally hard (NP-hard) \cite{bittel2021training}. Phenomena like barren plateaus - where the cost function has an exponentially decaying variance with increasing system size -  appear as a result of concentration in random circuits \cite{mcclean2018barren, uvarov2021barren}, non-locality of the cost function and circuit structure \cite{cerezo2021cost, holmes2022connecting, larocca2021diagnosing, pesah2021absence, marrero2021entanglement} or noise \cite{wang2021noise}. These results led to various strategies seeking to avoid the barren plateaus \cite{wiersema2020exploring, volkoff2021large, patti2021entanglement, sack2022avoiding, larocca2021theory, haug2021capacity}. 
Recent works analysed cost functions in terms of their Fourier expansions \cite{schuld2021effect, fontana2022spectral, vidal2018calculus, nakanishi2020sequential, parrish2019jacobi, ostaszewski2021structure}. In \cite{koczor2022quantum} an efficient method to approximately determine the cost function in the vicinity of a point was used to accelerate gradient descent.

Motivated by our previous work \cite{fontana2022spectral} that relates circuit structures with Fourier transform of the cost function to filter the effects of noise,  we investigate when the cost function can be recovered either fully classically or with minimal quantum queries. To this aim, sparsity in the Fourier basis plays a significant role and enables compressed sensing methods  \cite{donoho2006compressed}. These are used in a wide range of application including tomography \cite{gross2010quantum, flammia2012quantum}, eigenvalue estimation \cite{somma2019quantum}. This work  introduces compressed sensing in the context of near term quantum algorithms.

Firstly, we show that under appropriate conditions the cost function can be deduced from the circuit structure alone. When this fails, we show that for circuits with correlated parameters one can still use limited access to a quantum computer to fully recover the cost function. Our notion of \textit{recoverable cost function} implies efficient scaling in both sampling and computational complexity with system size. 

Secondly, we find numerical evidence for large VQAs whose cost functions are \textit{sparse} in the Fourier basis, which means most coefficients are (approximately) zero.
We propose the use of \textit{Basis Pursuit Denoising} (BPDN), an efficient compressed sensing method which is well suited for noisy settings \cite{chen2001atomic}. The advantage lies in reconstructing the cost function with a reduced number of samples, which can be taken randomly. 

Finally, we showcase an application of the method to QAOA \cite{farhi2014quantum}, by replacing the quantum-classical optimisation loop with a randomised sampling step followed by classical post-processing. The resulting algorithm is sample-efficient, and separates the quantum and classical runtimes, eliminating the need for low-latency solutions \cite{karalekas2020quantum} and allowing for asynchronous, distributed applications.

\section{Exact cost function recovery}
We consider VQAs with parametrised circuits $U(\boldsymbol{\theta}) $ on $N$ qubits, that encode the problem of interest in the cost function given by the expectation value of observable $O$:
\begin{equation}
	C(\boldsymbol \theta) := \<O\>_{\psi(\boldsymbol \theta)} = \text{Tr}[O U(\boldsymbol \theta) \psi_0 U^{\dagger}(\boldsymbol \theta)],
\end{equation}
where $\psi_0$ is some initial state and $\psi(\boldsymbol \theta) = U(\boldsymbol \theta) \rho_0 U^{\dagger}(\boldsymbol \theta)$.
This is a real valued function on $M$ independent (continuous) parameters $\boldsymbol{\theta}\in [0,2\pi]^M$ and it has been shown several times in the literature \cite{schuld2019evaluating, vidal2018calculus, nakanishi2020sequential, parrish2019jacobi, ostaszewski2021structure, schuld2021effect, koczor2022quantum} that its Fourier representation 
\begin{equation}
	\<O\>_{\psi(\boldsymbol{\theta})} = \sum_{\bf{k}\in \boldsymbol \Lambda}  \hat{c}_{\bf{k}}(O) e^{i \boldsymbol{\omega}_{\bf{k}} \cdot \boldsymbol{\theta}}
	\label{eq:thm1}
\end{equation}
has a finite number of coefficients indexed by a set $\Lambda$ with bounded frequencies in each dimensions, thus making it a \textit{trigonometric polynomial}. Furthermore, in previous work \cite{fontana2022spectral} we relate Fourier coefficients $\hat{c}_{\bf{k}} (O)$ and frequency spectra to the structure of variational circuits (see Appendix~\ref{app:fourier}). Throughout this work we consider $N$ to be the number of qubits, $n$ to be the cardinality of $\Lambda$ (i.e total number of non-zero frequencies), $m$ the sample size and $M$ to be the number of parametrised single-qubit rotations in $U(\boldsymbol{\theta})$ with $d\leq M$ independent parameters.

\subsection{Calculating cost functions from variational circuits}
Given the connection between Fourier coefficients and circuit structure, we highlight situations when the closed form of the cost function can be determined classically. We consider the following circuit classes that we call \emph{Clifford variational circuits} and take the form:
\begin{equation}
	U(\boldsymbol \theta) = C_M e^{-i P_M \theta_M/2} C_{M-1} \cdots C_1 e^{-i P_1 \theta_1/2} C_{0},
\end{equation} 
where the $M$ independently parametrised unitaries are generated by Pauli operators $P_i$ and where each unparametrised unitary $C_i$ is Clifford.
The closed form of the corresponding cost function $C(\boldsymbol{\theta})$ has at most $3^M$ terms, with each coefficient computable in polynomial time in $N$. 

\begin{theorem}
	For a VQA with Clifford variational circuits of depth $S$ (total number of gates), let $|\psi_0\>$ be the input state with stabiliser weight $\omega_\rho$ and let the measured observable $O$ decompose into $\omega_O$ Pauli operators. 
	If $M \in O(\log N)$ and $w_{\rho}, w_O, S \in O(\text{\normalfont{poly}}(N))$, then the cost function $C(\boldsymbol \theta)$ can be computed in closed form by a polynomial time in $N$ classical algorithm. 
	\label{thm:2}
\end{theorem}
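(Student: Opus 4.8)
\emph{Proof proposal.} The plan is to compute the cost function directly in the Heisenberg picture, $C(\boldsymbol\theta)=\text{Tr}[(U^{\dagger}(\boldsymbol\theta)\,O\,U(\boldsymbol\theta))\,\psi_0]$, by propagating the observable $O$ backwards through the circuit one gate at a time while keeping it stored as an explicit, polynomially-long linear combination of Pauli operators with trigonometric-polynomial coefficients, and only at the end pairing the result against $\psi_0$ using the stabiliser formalism.

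First I would use $e^{iP\theta/2}=\cos(\theta/2)\iden+i\sin(\theta/2)P$ (valid since $P^2=\iden$) to record how one parametrised layer acts by conjugation on a single Pauli $Q$: if $[P,Q]=0$ then $Q\mapsto Q$ (it contributes no dependence on that angle), and if $\{P,Q\}=0$ then $Q\mapsto\cos\theta\,Q+i\sin\theta\,PQ$, with $PQ$ again a Pauli up to a unit phase. Conjugation by the non-parametrised $C_i$ sends each Pauli to a single signed Pauli and, since a Clifford of total gate count $S$ acts on a Pauli in $O(S\cdot N)$ time in the tableau formalism, does so efficiently. Iterating -- starting from $O=\sum_{j=1}^{\omega_O}c_j\Pi_j$ and pushing through $C_M$, $e^{-iP_M\theta_M/2}$, $C_{M-1},\dots,C_0$ -- the non-parametrised layers leave the number of Pauli terms unchanged while each of the $M$ rotation layers at most doubles it and tags each coefficient with a further factor from $\{1,\cos\theta_j,\sin\theta_j\}$ (times a unit phase). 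This yields
\begin{equation}
	U^{\dagger}(\boldsymbol\theta)\,O\,U(\boldsymbol\theta)=\sum_{t=1}^{T} g_t(\boldsymbol\theta)\,Q_t,\qquad T\le \omega_O\, 2^{M},
\end{equation}
with each $Q_t$ an explicit Pauli and each $g_t$ a product of at most $M$ such trigonometric factors, all obtained from the circuit description in time $\mathrm{poly}(\omega_O,2^{M},S,N)$.

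Next I would expand $\cos\theta_j=\tfrac12(e^{i\theta_j}+e^{-i\theta_j})$ and $\sin\theta_j=\tfrac{1}{2i}(e^{i\theta_j}-e^{-i\theta_j})$ in every $g_t$, so that $g_t(\boldsymbol\theta)=\sum_{\mathbf{k}\in\{-1,0,1\}^{M}}\hat g_{t,\mathbf{k}}\,e^{i\mathbf{k}\cdot\boldsymbol\theta}$; collecting terms recovers the Fourier form \eqref{eq:thm1} with all frequency vectors lying in $\{-1,0,1\}^{M}$, hence at most $3^{M}$ nonzero coefficients -- matching the claimed $3^M$ bound and the spectral structure derived in Appendix~\ref{app:fourier}. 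The coefficient of $e^{i\mathbf{k}\cdot\boldsymbol\theta}$ is $\hat c_{\mathbf{k}}=\sum_{t}\hat g_{t,\mathbf{k}}\,\text{Tr}[Q_t\psi_0]$, so it remains to compute $\text{Tr}[Q_t\psi_0]$ for each of the $\le\omega_O 2^M$ Paulis $Q_t$. The stabiliser-weight hypothesis is exactly what makes this efficient: writing $|\psi_0\rangle$ in terms of its $\omega_\rho$ stabiliser components, each $\text{Tr}[Q_t\psi_0]$ reduces to $O(\omega_\rho^{2})$ stabiliser-state overlaps, each computable in $\mathrm{poly}(N)$ time by Gottesman--Knill. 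Putting the pieces together, when $M\in O(\log N)$ both $2^{M}$ and $3^{M}$ are $\mathrm{poly}(N)$, and with $\omega_O,\omega_\rho,S\in O(\mathrm{poly}(N))$ the total cost -- (number of branches)$\times$(cost of one overlap) plus (number of Fourier terms)$\times$(bookkeeping) -- is $\mathrm{poly}(N)$, which proves the theorem.

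The step I expect to be the main obstacle is controlling the blow-up in the number of Pauli terms: the Heisenberg-evolved observable is generically a sum of up to $\omega_O 2^{M}$ Paulis, so the argument closes \emph{only} because $M\in O(\log N)$ keeps $2^{M}$ (and hence also $3^{M}$) polynomial -- one must exploit that the interleaved $C_i$ are exact single-Pauli relabellings rather than generic unitaries, and be careful that every freshly created generator $P_jQ$ is then propagated correctly through the remaining gates. A secondary point that needs to be pinned down is the precise definition of ``stabiliser weight'' and the claim it underwrites, namely that $\text{Tr}[Q\psi_0]$ is computable in $\mathrm{poly}(N)$ time for an arbitrary Pauli $Q$; everything else -- phase bookkeeping for Pauli products, orthogonality of the Pauli basis, and the trigonometric-to-exponential expansion -- is routine.
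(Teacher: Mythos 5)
Your proposal is correct and follows essentially the same route as the paper's proof: Heisenberg evolution of the Pauli-decomposed observable through the Clifford variational circuit, yielding at most $3^M$ trigonometric monomials paired against the $\omega_\rho$-term stabiliser decomposition of $|\psi_0\rangle$, with each Pauli propagation and stabiliser overlap costing $\mathrm{poly}(N)$. The only difference is presentational -- you derive the branching explicitly from $e^{iP\theta/2}=\cos(\theta/2)\iden+i\sin(\theta/2)P$, whereas the paper invokes the Clifford process-mode decomposition of \cite{fontana2022spectral} -- and your bookkeeping (the $2^M$ Pauli branches versus $3^M$ Fourier terms) matches the paper's counting.
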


Well-known ans\"atze that take the above form include the Hardware-efficient Ansatz (HEA) \cite{kandala2017hardware} and the Unitary Coupled Cluster (UCC) \cite{romero2018strategies}.
Typically $w_O \in O(\text{\normalfont{poly}}(N))$ as measuring the observable needs to be efficient on a quantum computer.
In the case of HEA the number of parameters scales as $\text{\normalfont{poly}}(N)$ even at constant depth, and therefore Theorem~\ref{thm:2} does not apply directly.
On the other hand, while the UCC typically has a reference state that is a computational basis state \cite{mcclean2016theory} and therefore has $w_S = 1$, in its single and double excitations implementation the number of parameters scales polynomially as $M \in O(N^4)$, assuming half-filling \cite{peruzzo2014variational}. However, algorithms have been proposed that achieve good accuracy with a much lower number of parameters \cite{ryabinkin2018qubit, grimsley2019adaptive, tang2021qubit, lee2018generalized}. For example, ADAPT-VQE applied to LiH attains chemical accuracy with just 8 parameters \cite{grimsley2019adaptive}. Our result emphasizes that with increasing problem size the number of parameters needs to scale appropriately for shallower UCC-type circuits to avoid classical simulability.

\subsection{Cost function recovery with quantum queries}
\label{sec:access}
We analyse general classes of VQAs, whose cost functions can be reconstructed from a limited number of samples on a set of parameter values. 
\begin{definition}[Efficiently recoverable cost function]
\normalfont
A  VQA has an \textit{efficiently recoverable} cost function if there exists a $O({\text{poly}}(N))$ classical algorithm that determines the cost function up to a given additive accuracy from a set of samples $\{C(\boldsymbol{\theta}_i) \;|\; \boldsymbol{\theta}_i \in \boldsymbol \Theta \}$ taken on a parameter set $\boldsymbol \Theta\subset [0,2\pi]^d$ with cardinality $m \in O(\text{poly}(N))$.
\end{definition}
Whilst the definition includes any sampling scheme and classical post-processing, we focus on Fourier methods. The structure of parametrised circuits alone determines the maximal allowed frequency and finite support $\Lambda$ (see Appendix \ref{app:fourier} and  \cite{fontana2022spectral}). However, this does not directly determine which of the allowed frequencies appear in $C(\boldsymbol{\theta})$ nor the value of Fourier coefficients $\hat{c}_{\boldsymbol{k}}$.

If the cost function is sampled on a sufficiently fine grid, we are guaranteed to recover $\hat{c}_{\mathbf{k}} $ and $\omega_{\bf{k}}$ via the higher-dimensional Discrete Fourier Transform (DFT).  The $d$-dimensional Fast Fourier Transform (FFT)  will achieve this in $O(m \, \text{log} \, m)$ time for circuits that contain $d$-independent parameters.

\begin{theorem}[Correlated parameter cost functions are efficiently recoverable]
\label{thm:3}
Consider VQAs with circuits composed of $M \in O(\text{\normalfont{poly}}(N))$ parameterised single-qubit rotation gates and any number of unparameterised gates.
Then the cost function is efficiently recoverable via DFT if the number of independent parameters $d$ does not vary with $N$. 
\end{theorem}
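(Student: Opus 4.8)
The plan is to show that when the number of independent parameters $d$ is constant in $N$, sampling $C(\boldsymbol\theta)$ on a fixed-size grid and inverting via the $d$-dimensional DFT produces all Fourier data in time polynomial in $N$. First I would recall from Appendix~\ref{app:fourier} (and \cite{fontana2022spectral}) that each of the $M$ single-qubit rotation gates $e^{-iP_j\theta_j/2}$ contributes at most a shift of $\pm 1$ (and $0$) to the frequency along its associated parameter axis; hence for a circuit built from $M \in O(\mathrm{poly}(N))$ such gates the frequency support $\boldsymbol\Lambda$ lives in a box whose side length along each of the $d$ independent directions is bounded by $O(M) = O(\mathrm{poly}(N))$. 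The key point is that, although $M$ gates may be present, correlated parameters collapse these into $d$ genuine axes, so the relevant frequency lattice has size at most $(O(\mathrm{poly}(N)))^d$, which is still $O(\mathrm{poly}(N))$ precisely because $d$ does not grow with $N$.

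Next I would make the sampling explicit: choose along each independent axis an equispaced grid of $L_i$ points with $L_i$ strictly larger than twice the maximal frequency in that direction (a Nyquist-type condition), so that $m = \prod_{i=1}^{d} L_i \in O(\mathrm{poly}(N))$. On this grid the trigonometric polynomial $C(\boldsymbol\theta)$ of Eq.~\eqref{eq:thm1} is exactly reproduced by its $d$-dimensional DFT: aliasing is avoided because the grid resolves every frequency in the box, so the inverse DFT returns the coefficients $\hat{c}_{\mathbf k}$ and identifies the active frequencies $\boldsymbol\omega_{\mathbf k}$ with no error. Computationally, the $d$-dimensional FFT runs in $O(m\log m) \subseteq O(\mathrm{poly}(N))$ time, and with finite-shot estimates of each sample the recovered coefficients inherit an additive error controlled by the per-sample statistical error, so the "up to a given additive accuracy" clause of the definition is met with a further $\mathrm{poly}(N)$ overhead in shots. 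Assembling these pieces shows the cost function is efficiently recoverable in the sense of the preceding definition.

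I would then note the two assumptions that make the argument go through and flag where it is tight. The restriction to single-qubit rotation gates is what guarantees each generator $P_j$ has eigenvalues $\pm 1/2$ and therefore injects only a bounded ($O(1)$ per gate, $O(M)$ total) frequency shift; a general multi-qubit generator could in principle produce an exponentially large frequency, breaking the polynomial box bound. Allowing arbitrarily many unparameterised gates is harmless because they affect only the coefficient values, not the frequency support. The genuinely load-bearing hypothesis is that $d$ is $N$-independent: the frequency lattice scales as $(\mathrm{poly}(N))^{d}$, so if $d$ were allowed to grow even logarithmically the grid size $m$ would cease to be polynomial, which is exactly the regime the later sparse-recovery (BPDN) methods are designed to address.

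The main obstacle I anticipate is not any single estimate but making the frequency-box bound fully rigorous in the presence of parameter correlations: one must argue carefully that identifying parameters (setting several $\theta_j$ equal) can only \emph{project} the frequency support onto a lower-dimensional lattice and cannot enlarge the per-axis frequency range beyond $O(M)$. This requires invoking the explicit product structure of the Fourier expansion from Appendix~\ref{app:fourier} — each gate multiplies the spectrum by a fixed-width kernel, and substitution of a shared variable corresponds to summing frequency components along the merged directions, which keeps the maximal per-axis frequency bounded by the number of gates sharing that parameter, hence by $M$. Once that bookkeeping is in place, the Nyquist condition, the polynomial count of grid points, and the FFT runtime are all routine.
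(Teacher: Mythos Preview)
Your proposal is correct and follows essentially the same route as the paper: bound the size of the frequency support by a product over the $d$ independent axes, observe this is $O(\mathrm{poly}(N))^{d}=O(\mathrm{poly}(N))$ when $d$ is constant, and conclude that a polynomial-size grid plus DFT suffices. The only differences are cosmetic: the paper dispatches your ``main obstacle'' in one line by invoking Corollary~\ref{rem:1} directly (so the per-axis frequency bound is $2M_i+1$ where $M_i$ is the number of gates sharing superparameter $i$), and then uses the AM--GM-type bound $\prod_{i}(2M_i+1)\le(2M/d+1)^{d}$ rather than your coarser $(O(M))^{d}$; neither refinement is needed for the result.
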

The theorem is proved in Appendix \ref{app:thm:3}.
Theorem \ref{thm:3} would apply, for example, to heavily correlated ans\"atze like QAOA \cite{farhi2014quantum} or HVA \cite{wiersema2020exploring} where the depth is limited to a sufficiently low constant. Then the resulting cost function will be recoverable.
Notice that these conditions already cover interesting systems that are not known to be directly classically simulable. For example QAOA circuits with $p$ layers where the parameters are set to just two superparameters: $\beta_i = \beta,\, \gamma_i = \gamma, \, \forall \, i \le p$. Such correlation strategies have been explored to reduce the dimension of parameter space \cite{moussa2022unsupervised}. In this case we only need $p$ to scale polynomially in $N$ for the QAOA cost function to be efficiently recoverable via this method, although such circuits avoid current classical simulators for sufficiently large $N$.

\section{Sparse cost function recovery}

\subsection{Basis Pursuit}
We have so far considered settings in which cost functions of VQAs can either be simulated via completely classical methods, or reconstructed via hybrid methods whereby the landscape is sampled on the quantum computer.
Both approaches require exponential resources in the number of parameterised rotation gates. In particular, the total number of samples $m$ required to resolve the band-limited frequency spectrum will scale exponentially with the dimension of $\boldsymbol{\theta}$. 

However, efficient recovery is indeed possible for \textit{any} cost function, provided that it is \textit{sparse} in the Fourier basis -- namely that the number $s$ of (significant) frequencies is small. This is enabled by a host of classical signal processing techniques that has been developed to reconstruct sparse signals. For example, the $s$-sparse FFT can be implemented in $O(s \log^{O(1)} n)$ time \cite{kapralov2016sparse}.
We will focus on Basis Pursuit (BP) \cite{chen2001atomic}, which solves an L1 optimisation problem to find a sparse representation. This choice is motivated by its practical effectiveness, but other techniques such as Matching Pursuit \cite{mallat1993matching} and Orthogonal Matching Pursuit \cite{pati1993orthogonal} are also available. For a comparison between these techniques in the context of trigonometric polynomial recovery see \cite{kunis2008random}.
Since quantum cost functions are naturally probabilistic, we work with the extension of Basis Pursuit to signals corrupted by noise \cite{chen2001atomic}, known as Basis Pursuit Denoising (BPDN), which seeks:
\begin{equation}
	\text{min}_{\, \boldsymbol{\hat{c}}}  \ \frac{1}{2} \|\Phi\boldsymbol{\hat c} - \boldsymbol{C}\|^2_2 + \lambda \|\boldsymbol{\hat c}\|_1,
\end{equation}
where $\hat{\bf{c}}$ is the vector of all $n$ Fourier coefficients, $\Phi$ is the matrix obtained by selecting $m$ rows of the $n\times n$ DFT matrix corresponding to the sampling points $\{\boldsymbol{\theta}_i\}$, and $\boldsymbol{C} : = [C(\boldsymbol{\theta_1}),..., C(\boldsymbol{\theta_m}))]$ is the vector of observations of the cost function. Details about BP and BPDN can be found in Appendix \ref{app:bp}.

\subsection{Narrow gorge landscapes are unrecoverable}
The \textit{narrow gorge} is a phenomenon that is known to occur in highly parameterised and expressive VQAs like the HEA \cite{cerezo2021cost}. The name refers to the variational cost function being probabilistically concentrated around a value that is significantly different from the optimum. The $N$-qubit cost function $C(\boldsymbol{\theta})$ of a VQA class has a narrow gorge if there is a value $\boldsymbol{\theta}^{*}$ for which $C(\boldsymbol{\theta^{*}}) \in \Omega(1/\text{poly}(N))$ and has exponential vanishing variance $\sigma_{\boldsymbol{\theta}}[C(\boldsymbol{\theta})] \in O(b^{-N})$ for $b>1$ and any $N$.

The presence of narrow gorges therefore makes local optimisation from random starting conditions challenging, which is compounded by their close association to barren plateaus \cite{arrasmith2021equivalence}.  We show (in Appendix \ref{app:finite}) that narrow gorges also prevent efficient cost function recovery.
\begin{theorem}[Narrow gorge landscapes are unrecoverable]
	\label{thm:gorge_unrec_finite}
	Consider VQAs with a narrow gorge and periodic cost function.
	Then, for sufficiently large $N$ with probability $\ge 1 - 1/{\normalfont \text{poly}}(N)$ the cost function is not recoverable by BPDN if it is sampled on a quantum computer at $m \in O(\text{\normalfont poly}(N))$ randomly chosen points with at most polynomially many shots (measurements) per sample. 
\end{theorem}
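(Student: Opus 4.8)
The plan is to show that the exponentially small variance makes $C$ statistically indistinguishable, over polynomially many random evaluations with polynomially many shots, from the constant function equal to its mean $\mu:=\mathbb{E}_{\boldsymbol\theta}[C(\boldsymbol\theta)]$, so that BPDN's reconstruction cannot contain the $\Omega(1/\text{poly}(N))$-deep feature the narrow gorge places at $\boldsymbol\theta^*$. Write $g:=|C(\boldsymbol\theta^*)-\mu|$; the narrow-gorge hypothesis (the cost concentrating about a value ``significantly different from the optimum'') gives $g\in\Omega(1/\text{poly}(N))$, while periodicity makes $\{e^{i\boldsymbol\omega_{\boldsymbol k}\cdot\boldsymbol\theta}\}_{\boldsymbol k\in\boldsymbol\Lambda}$ orthonormal, so by Parseval $\sum_{\boldsymbol k\neq\boldsymbol 0}|\hat c_{\boldsymbol k}|^2=\sigma_{\boldsymbol\theta}^2[C]$ is exponentially small and, by Chebyshev, $C$ differs from $\mu$ by an exponentially small amount outside a subset of the torus of exponentially small measure.

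First I would control the data. By Chebyshev, $\Pr_{\boldsymbol\theta}[\,|C(\boldsymbol\theta)-\mu|>t\,]$ is exponentially small for any fixed $t\in 1/\text{poly}(N)$, so with $m\in O(\text{poly}(N))$ i.i.d.\ uniform sample points a union bound gives $|C(\boldsymbol\theta_i)-\mu|\le t$ for every $i$ except with exponentially small probability. For an observable of norm $\le 1$ estimated with $s\in O(\text{poly}(N))$ shots, Hoeffding and a union bound over the $m$ points give $|\widehat{C}(\boldsymbol\theta_i)-C(\boldsymbol\theta_i)|\le\tau$ for all $i$ except with probability $\le 2m\,e^{-2s\tau^2}$, which is $\le 1/\text{poly}(N)$ once $\tau=\Theta(\sqrt{\log(N)/s})$. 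Hence, with probability $\ge 1-1/\text{poly}(N)$, the data vector $\boldsymbol C$ obeys $\|\boldsymbol C-\mu\,\mathbf 1\|_\infty\le\delta$ with $\delta:=t+\tau\in 1/\text{poly}(N)$. (When $s$ is so small that $\tau\gtrsim g$, the gorge already sits below the per-sample noise floor and no reconstruction from these samples can resolve it, so one may assume $\delta<g/c$ for a suitable constant $c$.)

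Next I would argue that BPDN cannot reproduce the gorge. Its output is the trigonometric polynomial $C^\star(\boldsymbol\theta)=\sum_{\boldsymbol k\in\boldsymbol\Lambda}\hat c^\star_{\boldsymbol k}e^{i\boldsymbol\omega_{\boldsymbol k}\cdot\boldsymbol\theta}$ at the minimiser; comparing its objective with that of the pure DC candidate ($\hat c_{\boldsymbol 0}=\mu$, other coefficients zero), whose value is $\le\tfrac12 m\delta^2+\lambda|\mu|$, gives $\|\hat{\boldsymbol c}^\star\|_1\le|\mu|+m\delta^2/(2\lambda)$, so choosing $\lambda\asymp m\delta$ yields $\|\hat{\boldsymbol c}^\star\|_1\le|\mu|+\delta/2$. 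I would then invoke an incoherence property of the random sample set: with probability $\ge 1-1/\text{poly}(N)$, every trigonometric polynomial with frequencies in $\boldsymbol\Lambda$, Fourier $\ell_1$-norm at most $|\mu|+\delta/2$, and values within $\delta$ of $\mu$ at all $m$ sample points takes a value $o(g)$-close to $\mu$ at any fixed further point $\boldsymbol\theta^*$ --- equivalently, the point-evaluation functional at $\boldsymbol\theta^*$ is approximable by a bounded-weight combination of the sampled ones (a polynomial Lebesgue constant). This forces $|C^\star(\boldsymbol\theta^*)-\mu|=o(g)$, hence $|C^\star(\boldsymbol\theta^*)-C(\boldsymbol\theta^*)|\ge g-o(g)\in\Omega(1/\text{poly}(N))$, so $C$ is not recovered to the accuracy $\le g/2$ needed to resolve the gorge. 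An alternative that avoids analysing BPDN's feasible set: were $C^\star$ accurate to $\epsilon=o(g)$ everywhere, then since $C$ is within $\delta$ of $\mu$ off an exponentially small set, $C^\star-\mu$ would be within $\epsilon+\delta$ of $0$ there, and a Remez-type inequality for bounded-degree trigonometric polynomials would propagate this to $\|C^\star-\mu\|_\infty=o(g)$, again contradicting $|C(\boldsymbol\theta^*)-\mu|=g$.

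The main obstacle is exactly this last step: pinning down BPDN's value at the unsampled $\boldsymbol\theta^*$, i.e.\ excluding a spurious reconstruction of the gorge by the $\ell_1$ minimiser. Both routes reduce it to a quantitative, dimension-sensitive fact --- a polynomial Lebesgue constant for $m\in O(\text{poly}(N))$ random points against $\boldsymbol\Lambda$, or a controlled Remez constant --- which is benign when $d$ (equivalently $n=|\boldsymbol\Lambda|$, or the circuit's number of superparameters) does not grow too fast with $N$, but needs care in the extreme high-dimensional regime, where one would instead argue that failure of that property already corresponds to a sample-starved setting in which non-recovery holds \emph{a fortiori}. The concentration estimates of the second paragraph are, by contrast, entirely routine.
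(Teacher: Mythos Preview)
Your concentration estimates are fine and in the same spirit as the paper's Lemma bounding the $2$-norm of the sampled vector via Chebyshev, but from there you take a much harder path than necessary. The paper's proof is essentially one line past the concentration step: with $n_s\in O(\text{poly}(N))$ shots the BPDN tolerance obeys $\epsilon\propto 1/\sqrt{n_s}\in\Omega(1/\text{poly}(N))$, while (working in the constrained formulation and with the cost centred at zero mean) the sample vector satisfies $\|\boldsymbol C\|_2\in O(b^{-N})$ with high probability. Hence for large $N$ the all-zero coefficient vector $\hat{\boldsymbol c}_{\text{err}}=\boldsymbol 0$ is feasible, and since its $\ell_1$-norm is zero it is the BPDN minimiser. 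The reconstruction is therefore the identically zero function, which misses $C(\boldsymbol\theta^*)\in\Omega(1/\text{poly}(N))$. No pointwise control of the reconstruction at an unsampled $\boldsymbol\theta^*$ is needed, so no incoherence property, Lebesgue-constant bound, or Remez-type inequality enters.

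You graze this idea in your parenthetical but then set it aside: you compare the shot-noise scale $\tau$ with the gorge depth $g$ and assume $\delta<g/c$ to move on. The paper's comparison is different and decisive: it is between the BPDN tolerance $\epsilon$ and $\|\boldsymbol C\|_2$, and the latter is \emph{exponentially} small, so $\epsilon$ wins for any polynomial shot budget, regardless of how $\tau$ relates to $g$. Your alternative route---bounding $\|\hat{\boldsymbol c}^\star\|_1$ via the DC candidate and then invoking a structural inequality to pin down $C^\star(\boldsymbol\theta^*)$---is genuinely harder, and as you yourself flag, the last step is a real gap: neither the polynomial Lebesgue constant nor the Remez bound is established, and in the regime where $|\boldsymbol\Lambda|$ is exponential in $N$ (precisely the barren-plateau regime at issue) such bounds can fail outright. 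The fix is not to patch that step but to drop it: once you see that $\boldsymbol 0$ (or the pure DC vector, if you keep $\mu$) is feasible and $\ell_1$-minimal, you are done.
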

The unrecoverability of narrow gorge landscapes can be seen as a manifestation of the uncertainty principle of Fourier analysis \cite{donoho1989uncertainty}. A narrow gorge is indeed a concentration of a function in a small region of parameter space. The fact that cost functions presenting a narrow gorge are unrecoverable via Basis Pursuit, even without noise (see Theorem \ref{thm:gorge_unrec} in Appendix \ref{app:finite}), in turn implies that they cannot be too sparse in the Fourier basis. To be more precise, the theorems proved here and \cite[Corollary 2.2]{rauhut2007random} together imply that sparsity must obey $s > \frac{m}{C \log(n/\epsilon)}$ for any approximation error$\epsilon$ and constant $C$. Therefore, since $n \sim c^N$, and the theorems are valid for any $m \in \text{poly}(N)$, it must mean that the sparsity is superpolynomial in $N$.

\subsection{Do provably recoverable quantum cost functions exist?}
At this point one is left to wonder whether there can exist any VQAs that are provably recoverable by sparse recovery methods. Based on Theorem \ref{thm:gorge_unrec_finite}, we can exclude those algorithms that are known to present narrow gorges and barren plateaus. Overall this points us away from overparameterised ans\"atze like the HEA, and towards VQAs with correlated parameters such as QAOA and HVA, which are known to not exhibit barren plateaus, under certain conditions \cite{larocca2021diagnosing}.
In the next sections we provide numerical evidence that, in some cases, the cost functions of these VQAs are sparse in Fourier space, and hence efficiently recoverable.

\section{Numerical experiments}
\begin{figure}
\centering
\includegraphics[width=0.5\textwidth]{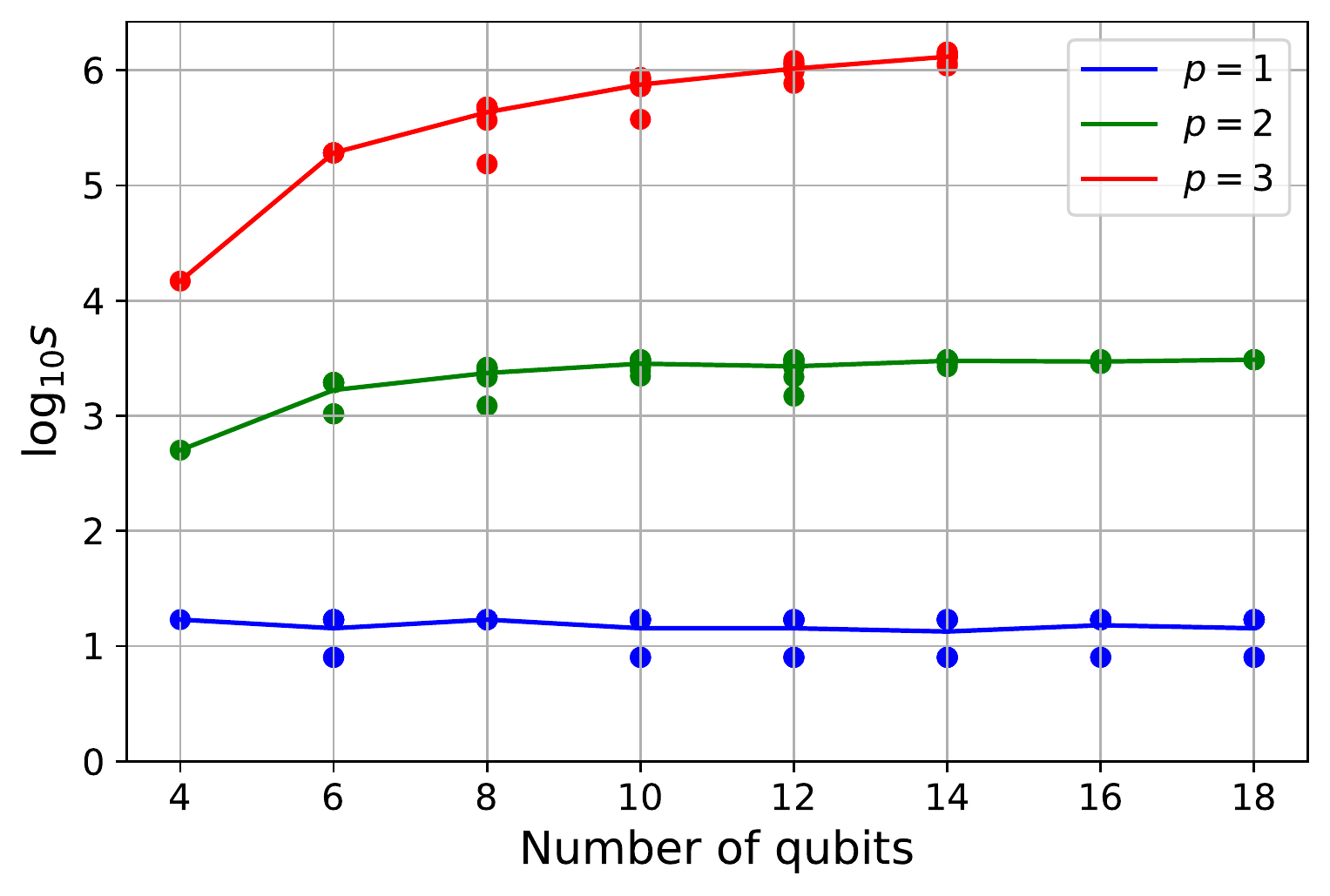}
\caption{Scaling of sparsity $s$ of Fourier coefficients for QAOA on MaxCut for 10 random 3-regular graphs, for different number of qubits $N$ and layers $p$. The circuits have $d =2p$ independent parameters. The circles are the outcomes for single graphs and the line shows the average trend.}
\label{fig:sparsity-plot}
\end{figure}

\subsection{Measuring sparsity}
We focus on QAOA \cite{farhi2014quantum}, for the MaxCut combinatorial problem on random 3-regular graphs. Our approach to measure sparsity consists of sampling the exact cost function on a (uniform) grid covering the parameter space at a resolution determined by the maximum support (as indicated by Theorem 2 in \cite{fontana2022spectral}), and applying multidimensional FFT on the resulting data, thus returning all the Fourier coefficients. We make use of the Qulacs statevector simulator \cite{suzuki2021qulacs}.
The results are shown in Figure \ref{fig:sparsity-plot}. The sparsity is seen to scale favourably with the number of qubits $N$. On the other hand, the scaling with the number of layers $p$ is less favourable, displaying an exponential trend.
Since sampling from the entire grid is required to calculate sparsity there exists an unavoidable limitation to size of the algorithm that can be investigated. As such the analysis was limited to $p=3$, $N=14$. This highlights the fact that exact reconstruction quickly becomes impractical even at limited depth.

\subsection{Recovering the QAOA landscape}
The numerical results for sparsity encouraged us to attempt recovering the QAOA cost function, focusing on large systems with shallow depth.
We performed the reconstruction with a commonly used approach to solving BPDN problems, the Fast Iterative Shrinkage-Thresholding Algorithm (FISTA) \cite{beck2009fast}. 
We found it beneficial to follow the algorithm with an additional L2-only minimisation step, which refines the coefficients on the identified support only. 
The complete algorithm for recovery is outlined as \textsc{Recover} in Algorithm~\ref{alg:1} in Appendix \ref{app:algo}.

In Figure \ref{fig:mse_trend} we show the trend of recovery accuracy for $p=2$ and number of qubits between 16 and 28, measured by the mean-squared error (MSE) on an out-of-sample (not used for BPDN) set of 100 random landscape points, as a fraction of the mean squared cost function value. The number of samples used for BPDN ranges from 1000 to 4000. While FISTA already gives a low reconstruction error, the \textsc{Recover} algorithm yields more accurate solutions, up to one order of magnitude lower MSE. The performance appears not to scale significantly with system size.
\begin{figure}
\centering
\includegraphics[width=0.5\textwidth]{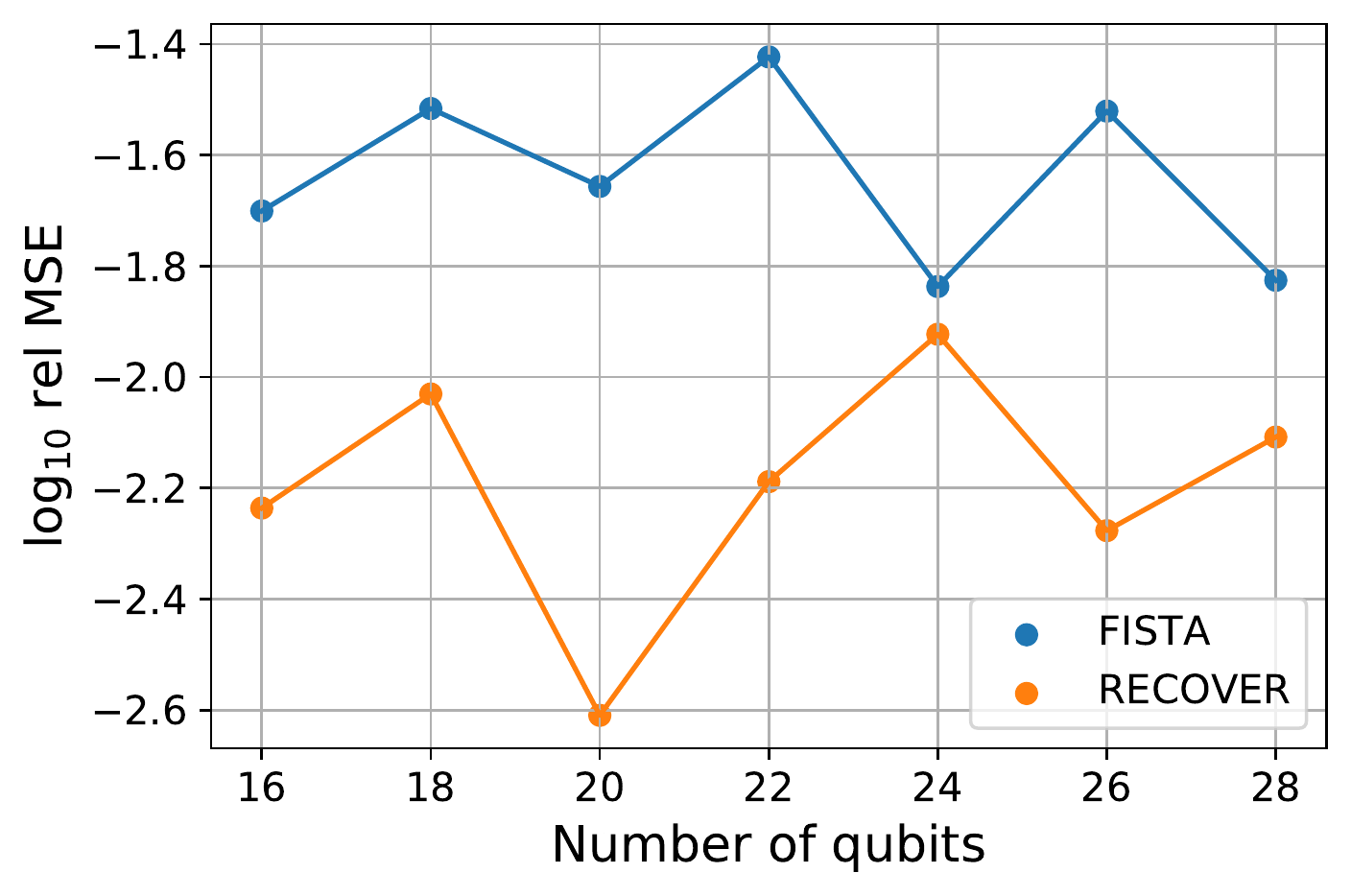}
\caption{Trend of out-of-sample relative mean-squared error of reconstructed cost functions with number of qubits, $p=2$ QAOA for MaxCut on a random 3-regular graph. Comparison between FISTA and \textsc{Recover} in Algorithm~\ref{alg:1}.}
\label{fig:mse_trend}
\end{figure}
\subsection{Enhancing QAOA optimisation}
\begin{figure*}
\centering
\includegraphics[width=\textwidth]{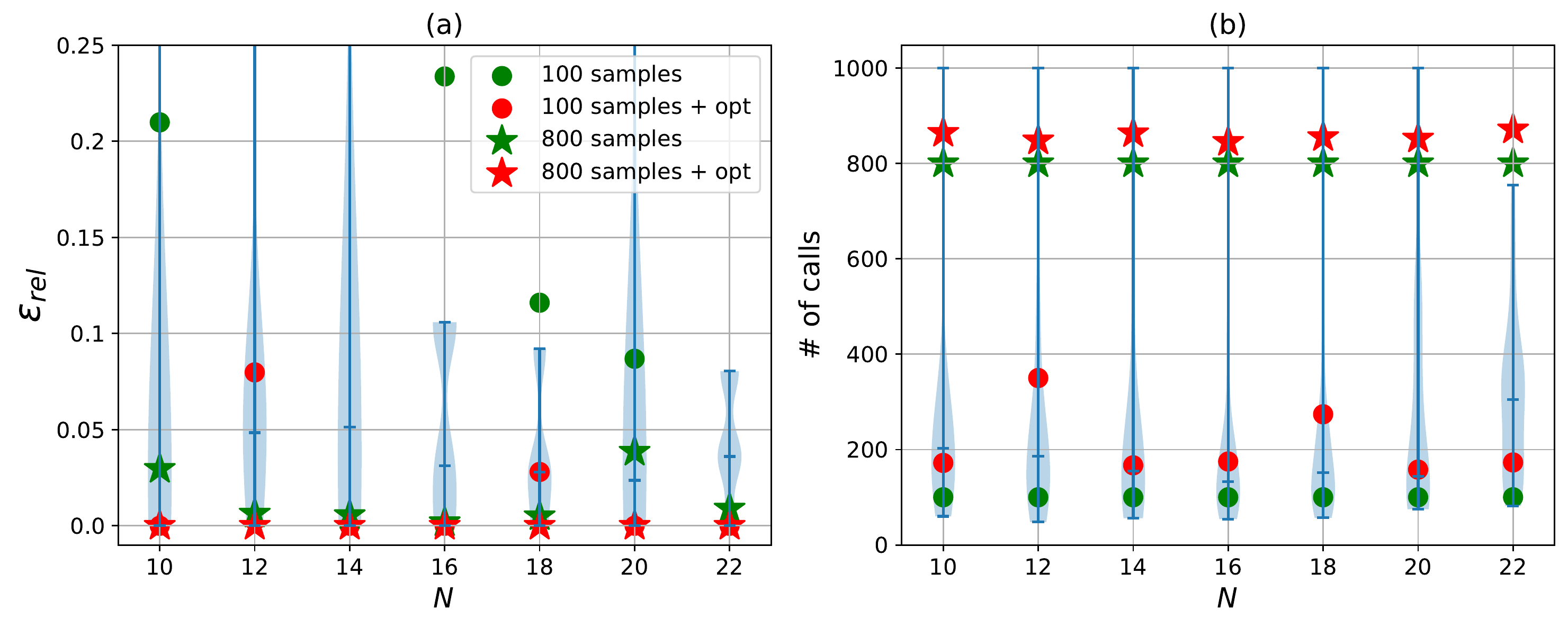}
\caption{Quality of optimum parameters found on recovered landscapes (green markers), compared to 100 runs of randomly initialised GD (blue violin plots, faint line is median value).
Figure (a) shows the relative error compared to the true minimum value, (b) the total number of calls to the quantum computer.
Two separate reconstructions, with 100 (circles) and 800 (stars) samples were performed. The red markers show the result after a further round of gradient descent.}
\label{fig:recon_vs_gd}
\end{figure*}
Next we examine a potential application of cost function recovery.
The method's strengths lie in its low sampling requirement to reach a good level of accuracy and the fact that the samples can be taken completely at random.
It is natural therefore to consider applying BPDN to enhance gradient-based VQA optimisation, as the latter procedure typically suffers from a large number of samples that depends on a continuous feedback between classical and quantum.

We perform a simulation of $p=2$ QAOA, this time performing gradient descent (GD) on the reconstructed landscape. As the \textsc{Recover} algorithm returns a closed form for the cost function, the optimisation is performed on a classical device. We then compare the relative error in true cost function value between the optimal parameters from the reconstructed cost function and the exact global optimum. This is shown in Figure \ref{fig:recon_vs_gd} (a). We also consider the total number of calls to the quantum computer, shown in Figure \ref{fig:recon_vs_gd} (b) and we show the outcome of applying a further round of GD on the true cost function, using the reconstructed minima as starting points. All reconstructions are performed twice, using 100 (circles) and 800 samples (stars) respectively, and compare to distributions of 100 runs of GD with random initialisation (blue violin plots).

The cost functions recovered from 100 samples generally do not yield the exact location of the optimum, however they are good starting points for GD, generally achieving a better accuracy than random initialisation while requiring a similar overall number of calls to the quantum computer.
On the other hand, 800 samples typically yield a high accuracy optimum without any further GD. Although this is more than the median number of quantum computer calls with random GD, it should be stressed that the samples are taken in a completely asynchronous manner while random GD requires a classical-quantum loop.

\section{Discussion}
Our methods for cost function reconstruction have several practical applications.
First, spectral and compressed sensing methods can be used to determine regimes (i.e circuit structure, parameter regions, applications) where cost functions of VQAs can be approximated via classical means. Sparse recovery in combination with quantum simulators may be used to (approximately) recreate entire VQA cost function. Since sparse reconstruction is partially resilient to noise in the input samples, one may use simulators that return approximate expectation values of quantum states, such as those based on tensor networks \cite{zhou2020limits} or neural networks \cite{jonsson2018neural, medvidovic2021classical}.
The second application is distributing the workload in a VQA across multiple distinct machines running parallel quantum computations, with the goal of realising speed-ups in computational runtime and more efficient allocation of quantum resources. Basis Pursuit is particularly well suited for this goal as the samples can be taken asynchronously at random, and therefore the sampling could be performed on different quantum computers at different times.

The main novelty in our approach is the focus on sparsity in the Fourier basis. Clearly, if sparsity could be rigorously proven for classes of VQAs with increasing problem sizes it would have important real-world implications. 
The results also raise the possibility that cost functions of some VQAs with large system size may be efficiently simulable by classical means. In the case of QAOA, closed form solutions exist for $p=1$ \cite{hadfield2018quantum}; furthermore, concentration results suggest that the QAOA landscape at large $N$ may be well approximated by smaller systems \cite{brandao2018fixed}.
This latter fact may be a reason why the sparsity in Figure \ref{fig:sparsity-plot} appears to be stable as $N$ grows.

Finally, the idea of predicting the output of quantum computations with a classical model trained on quantum data is shared by the techniques in quantum machine learning (QML), and carries important implications for quantum advantage \cite{huang2021power}. The present work was developed independently from \cite{schreiber2022classical} which introduced recently the concept of classical surrogates for QML models. Their method is similar to the one  in Section \ref{sec:access}, however it is specific to a QML setting.

\section{Conclusion}
Motivated by understanding the regimes in which VQAs can offer practical quantum advantage, we explore whether spectral analysis and compressed sensing can be used to efficiently construct the cost function of VQAs.
For this we employ recent works that decompose quantum channels into Fourier process modes \cite{cirstoiu2017global, fontana2022spectral, koczor2022quantum}.

We found a positive answer for practically-relevant circuits with a low number of independent parameters but not directly limited in depth or system size.  Next, we applied methods from classical signal processing to evaluate sparse recovery of the cost function from a few samples. We presented numerical evidence that low depth QAOA cost functions are sparse in the Fourier domain, and demonstrated efficient recovery. The results may carry important implications for the optimisation of VQAs and their advantage over classical methods.

\section{Acknowledgements}
We would like to thank Raul Garcia Patron and Michał Stechły for useful discussions and Steven Herbert, Konstantinos Meichanetzidis and Marcello Benedetti for their helpful comments and feedback on this manuscript. EF and IR acknowledge the support of the UK government department for Business, Energy and Industrial Strategy through the UK national quantum technologies programme. EF acknowledges the support of an industrial CASE (iCASE) studentship, funded by the UK Engineering and Physical Sciences Research Council (grant EP/T517665/1), in collaboration with the university of Strathclyde, the National Physical Laboratory, and Quantinuum.

\bibliography{main}

\begin{thebibliography}{66}%
\makeatletter
\providecommand \@ifxundefined [1]{%
 \@ifx{#1\undefined}
}%
\providecommand \@ifnum [1]{%
 \ifnum #1\expandafter \@firstoftwo
 \else \expandafter \@secondoftwo
 \fi
}%
\providecommand \@ifx [1]{%
 \ifx #1\expandafter \@firstoftwo
 \else \expandafter \@secondoftwo
 \fi
}%
\providecommand \natexlab [1]{#1}%
\providecommand \enquote  [1]{``#1''}%
\providecommand \bibnamefont  [1]{#1}%
\providecommand \bibfnamefont [1]{#1}%
\providecommand \citenamefont [1]{#1}%
\providecommand \href@noop [0]{\@secondoftwo}%
\providecommand \href [0]{\begingroup \@sanitize@url \@href}%
\providecommand \@href[1]{\@@startlink{#1}\@@href}%
\providecommand \@@href[1]{\endgroup#1\@@endlink}%
\providecommand \@sanitize@url [0]{\catcode `\\12\catcode `\$12\catcode
  `\&12\catcode `\#12\catcode `\^12\catcode `\_12\catcode `\%12\relax}%
\providecommand \@@startlink[1]{}%
\providecommand \@@endlink[0]{}%
\providecommand \url  [0]{\begingroup\@sanitize@url \@url }%
\providecommand \@url [1]{\endgroup\@href {#1}{\urlprefix }}%
\providecommand \urlprefix  [0]{URL }%
\providecommand \Eprint [0]{\href }%
\providecommand \doibase [0]{http://dx.doi.org/}%
\providecommand \selectlanguage [0]{\@gobble}%
\providecommand \bibinfo  [0]{\@secondoftwo}%
\providecommand \bibfield  [0]{\@secondoftwo}%
\providecommand \translation [1]{[#1]}%
\providecommand \BibitemOpen [0]{}%
\providecommand \bibitemStop [0]{}%
\providecommand \bibitemNoStop [0]{.\EOS\space}%
\providecommand \EOS [0]{\spacefactor3000\relax}%
\providecommand \BibitemShut  [1]{\csname bibitem#1\endcsname}%
\let\auto@bib@innerbib\@empty
\bibitem [{\citenamefont {Arute}\ \emph {et~al.}(2019)\citenamefont {Arute},
  \citenamefont {Arya}, \citenamefont {Babbush}, \citenamefont {Bacon},
  \citenamefont {Bardin}, \citenamefont {Barends}, \citenamefont {Biswas},
  \citenamefont {Boixo}, \citenamefont {Brandao}, \citenamefont {Buell} \emph
  {et~al.}}]{arute2019quantum}%
  \BibitemOpen
  \bibfield  {author} {\bibinfo {author} {\bibfnamefont {F.}~\bibnamefont
  {Arute}}, \bibinfo {author} {\bibfnamefont {K.}~\bibnamefont {Arya}},
  \bibinfo {author} {\bibfnamefont {R.}~\bibnamefont {Babbush}}, \bibinfo
  {author} {\bibfnamefont {D.}~\bibnamefont {Bacon}}, \bibinfo {author}
  {\bibfnamefont {J.~C.}\ \bibnamefont {Bardin}}, \bibinfo {author}
  {\bibfnamefont {R.}~\bibnamefont {Barends}}, \bibinfo {author} {\bibfnamefont
  {R.}~\bibnamefont {Biswas}}, \bibinfo {author} {\bibfnamefont
  {S.}~\bibnamefont {Boixo}}, \bibinfo {author} {\bibfnamefont {F.~G.}\
  \bibnamefont {Brandao}}, \bibinfo {author} {\bibfnamefont {D.~A.}\
  \bibnamefont {Buell}},  \emph {et~al.},\ }\href@noop {} {\bibfield  {journal}
  {\bibinfo  {journal} {Nature}\ }\textbf {\bibinfo {volume} {574}},\ \bibinfo
  {pages} {505} (\bibinfo {year} {2019})}\BibitemShut {NoStop}%
\bibitem [{\citenamefont {Peruzzo}\ \emph {et~al.}(2014)\citenamefont
  {Peruzzo}, \citenamefont {McClean}, \citenamefont {Shadbolt}, \citenamefont
  {Yung}, \citenamefont {Zhou}, \citenamefont {Love}, \citenamefont
  {Aspuru-Guzik},\ and\ \citenamefont {O'brien}}]{peruzzo2014variational}%
  \BibitemOpen
  \bibfield  {author} {\bibinfo {author} {\bibfnamefont {A.}~\bibnamefont
  {Peruzzo}}, \bibinfo {author} {\bibfnamefont {J.}~\bibnamefont {McClean}},
  \bibinfo {author} {\bibfnamefont {P.}~\bibnamefont {Shadbolt}}, \bibinfo
  {author} {\bibfnamefont {M.-H.}\ \bibnamefont {Yung}}, \bibinfo {author}
  {\bibfnamefont {X.-Q.}\ \bibnamefont {Zhou}}, \bibinfo {author}
  {\bibfnamefont {P.~J.}\ \bibnamefont {Love}}, \bibinfo {author}
  {\bibfnamefont {A.}~\bibnamefont {Aspuru-Guzik}}, \ and\ \bibinfo {author}
  {\bibfnamefont {J.~L.}\ \bibnamefont {O'brien}},\ }\href@noop {} {\bibfield
  {journal} {\bibinfo  {journal} {Nature communications}\ }\textbf {\bibinfo
  {volume} {5}},\ \bibinfo {pages} {1} (\bibinfo {year} {2014})}\BibitemShut
  {NoStop}%
\bibitem [{\citenamefont {Bravyi}\ \emph {et~al.}(2021)\citenamefont {Bravyi},
  \citenamefont {Gosset},\ and\ \citenamefont
  {Movassagh}}]{bravyi2021classical}%
  \BibitemOpen
  \bibfield  {author} {\bibinfo {author} {\bibfnamefont {S.}~\bibnamefont
  {Bravyi}}, \bibinfo {author} {\bibfnamefont {D.}~\bibnamefont {Gosset}}, \
  and\ \bibinfo {author} {\bibfnamefont {R.}~\bibnamefont {Movassagh}},\
  }\href@noop {} {\bibfield  {journal} {\bibinfo  {journal} {Nature Physics}\
  }\textbf {\bibinfo {volume} {17}},\ \bibinfo {pages} {337} (\bibinfo {year}
  {2021})}\BibitemShut {NoStop}%
\bibitem [{\citenamefont {Holmes}\ \emph {et~al.}(2022)\citenamefont {Holmes},
  \citenamefont {Sharma}, \citenamefont {Cerezo},\ and\ \citenamefont
  {Coles}}]{holmes2022connecting}%
  \BibitemOpen
  \bibfield  {author} {\bibinfo {author} {\bibfnamefont {Z.}~\bibnamefont
  {Holmes}}, \bibinfo {author} {\bibfnamefont {K.}~\bibnamefont {Sharma}},
  \bibinfo {author} {\bibfnamefont {M.}~\bibnamefont {Cerezo}}, \ and\ \bibinfo
  {author} {\bibfnamefont {P.~J.}\ \bibnamefont {Coles}},\ }\href@noop {}
  {\bibfield  {journal} {\bibinfo  {journal} {PRX Quantum}\ }\textbf {\bibinfo
  {volume} {3}},\ \bibinfo {pages} {010313} (\bibinfo {year}
  {2022})}\BibitemShut {NoStop}%
\bibitem [{\citenamefont {Arrasmith}\ \emph {et~al.}(2021)\citenamefont
  {Arrasmith}, \citenamefont {Holmes}, \citenamefont {Cerezo},\ and\
  \citenamefont {Coles}}]{arrasmith2021equivalence}%
  \BibitemOpen
  \bibfield  {author} {\bibinfo {author} {\bibfnamefont {A.}~\bibnamefont
  {Arrasmith}}, \bibinfo {author} {\bibfnamefont {Z.}~\bibnamefont {Holmes}},
  \bibinfo {author} {\bibfnamefont {M.}~\bibnamefont {Cerezo}}, \ and\ \bibinfo
  {author} {\bibfnamefont {P.~J.}\ \bibnamefont {Coles}},\ }\href@noop {}
  {\bibfield  {journal} {\bibinfo  {journal} {arXiv preprint arXiv:2104.05868}\
  } (\bibinfo {year} {2021})}\BibitemShut {NoStop}%
\bibitem [{\citenamefont {Bittel}\ and\ \citenamefont
  {Kliesch}(2021)}]{bittel2021training}%
  \BibitemOpen
  \bibfield  {author} {\bibinfo {author} {\bibfnamefont {L.}~\bibnamefont
  {Bittel}}\ and\ \bibinfo {author} {\bibfnamefont {M.}~\bibnamefont
  {Kliesch}},\ }\href@noop {} {\bibfield  {journal} {\bibinfo  {journal}
  {Physical Review Letters}\ }\textbf {\bibinfo {volume} {127}},\ \bibinfo
  {pages} {120502} (\bibinfo {year} {2021})}\BibitemShut {NoStop}%
\bibitem [{\citenamefont {McClean}\ \emph {et~al.}(2018)\citenamefont
  {McClean}, \citenamefont {Boixo}, \citenamefont {Smelyanskiy}, \citenamefont
  {Babbush},\ and\ \citenamefont {Neven}}]{mcclean2018barren}%
  \BibitemOpen
  \bibfield  {author} {\bibinfo {author} {\bibfnamefont {J.~R.}\ \bibnamefont
  {McClean}}, \bibinfo {author} {\bibfnamefont {S.}~\bibnamefont {Boixo}},
  \bibinfo {author} {\bibfnamefont {V.~N.}\ \bibnamefont {Smelyanskiy}},
  \bibinfo {author} {\bibfnamefont {R.}~\bibnamefont {Babbush}}, \ and\
  \bibinfo {author} {\bibfnamefont {H.}~\bibnamefont {Neven}},\ }\href@noop {}
  {\bibfield  {journal} {\bibinfo  {journal} {Nature communications}\ }\textbf
  {\bibinfo {volume} {9}},\ \bibinfo {pages} {1} (\bibinfo {year}
  {2018})}\BibitemShut {NoStop}%
\bibitem [{\citenamefont {Uvarov}\ and\ \citenamefont
  {Biamonte}(2021)}]{uvarov2021barren}%
  \BibitemOpen
  \bibfield  {author} {\bibinfo {author} {\bibfnamefont {A.}~\bibnamefont
  {Uvarov}}\ and\ \bibinfo {author} {\bibfnamefont {J.~D.}\ \bibnamefont
  {Biamonte}},\ }\href@noop {} {\bibfield  {journal} {\bibinfo  {journal}
  {Journal of Physics A: Mathematical and Theoretical}\ }\textbf {\bibinfo
  {volume} {54}},\ \bibinfo {pages} {245301} (\bibinfo {year}
  {2021})}\BibitemShut {NoStop}%
\bibitem [{\citenamefont {Cerezo}\ \emph {et~al.}(2021)\citenamefont {Cerezo},
  \citenamefont {Sone}, \citenamefont {Volkoff}, \citenamefont {Cincio},\ and\
  \citenamefont {Coles}}]{cerezo2021cost}%
  \BibitemOpen
  \bibfield  {author} {\bibinfo {author} {\bibfnamefont {M.}~\bibnamefont
  {Cerezo}}, \bibinfo {author} {\bibfnamefont {A.}~\bibnamefont {Sone}},
  \bibinfo {author} {\bibfnamefont {T.}~\bibnamefont {Volkoff}}, \bibinfo
  {author} {\bibfnamefont {L.}~\bibnamefont {Cincio}}, \ and\ \bibinfo {author}
  {\bibfnamefont {P.~J.}\ \bibnamefont {Coles}},\ }\href@noop {} {\bibfield
  {journal} {\bibinfo  {journal} {Nature communications}\ }\textbf {\bibinfo
  {volume} {12}},\ \bibinfo {pages} {1} (\bibinfo {year} {2021})}\BibitemShut
  {NoStop}%
\bibitem [{\citenamefont {Larocca}\ \emph
  {et~al.}(2021{\natexlab{a}})\citenamefont {Larocca}, \citenamefont {Czarnik},
  \citenamefont {Sharma}, \citenamefont {Muraleedharan}, \citenamefont
  {Coles},\ and\ \citenamefont {Cerezo}}]{larocca2021diagnosing}%
  \BibitemOpen
  \bibfield  {author} {\bibinfo {author} {\bibfnamefont {M.}~\bibnamefont
  {Larocca}}, \bibinfo {author} {\bibfnamefont {P.}~\bibnamefont {Czarnik}},
  \bibinfo {author} {\bibfnamefont {K.}~\bibnamefont {Sharma}}, \bibinfo
  {author} {\bibfnamefont {G.}~\bibnamefont {Muraleedharan}}, \bibinfo {author}
  {\bibfnamefont {P.~J.}\ \bibnamefont {Coles}}, \ and\ \bibinfo {author}
  {\bibfnamefont {M.}~\bibnamefont {Cerezo}},\ }\href@noop {} {\bibfield
  {journal} {\bibinfo  {journal} {arXiv preprint arXiv:2105.14377}\ } (\bibinfo
  {year} {2021}{\natexlab{a}})}\BibitemShut {NoStop}%
\bibitem [{\citenamefont {Pesah}\ \emph {et~al.}(2021)\citenamefont {Pesah},
  \citenamefont {Cerezo}, \citenamefont {Wang}, \citenamefont {Volkoff},
  \citenamefont {Sornborger},\ and\ \citenamefont {Coles}}]{pesah2021absence}%
  \BibitemOpen
  \bibfield  {author} {\bibinfo {author} {\bibfnamefont {A.}~\bibnamefont
  {Pesah}}, \bibinfo {author} {\bibfnamefont {M.}~\bibnamefont {Cerezo}},
  \bibinfo {author} {\bibfnamefont {S.}~\bibnamefont {Wang}}, \bibinfo {author}
  {\bibfnamefont {T.}~\bibnamefont {Volkoff}}, \bibinfo {author} {\bibfnamefont
  {A.~T.}\ \bibnamefont {Sornborger}}, \ and\ \bibinfo {author} {\bibfnamefont
  {P.~J.}\ \bibnamefont {Coles}},\ }\href@noop {} {\bibfield  {journal}
  {\bibinfo  {journal} {Physical Review X}\ }\textbf {\bibinfo {volume} {11}},\
  \bibinfo {pages} {041011} (\bibinfo {year} {2021})}\BibitemShut {NoStop}%
\bibitem [{\citenamefont {Marrero}\ \emph {et~al.}(2021)\citenamefont
  {Marrero}, \citenamefont {Kieferov{\'a}},\ and\ \citenamefont
  {Wiebe}}]{marrero2021entanglement}%
  \BibitemOpen
  \bibfield  {author} {\bibinfo {author} {\bibfnamefont {C.~O.}\ \bibnamefont
  {Marrero}}, \bibinfo {author} {\bibfnamefont {M.}~\bibnamefont
  {Kieferov{\'a}}}, \ and\ \bibinfo {author} {\bibfnamefont {N.}~\bibnamefont
  {Wiebe}},\ }\href@noop {} {\bibfield  {journal} {\bibinfo  {journal} {PRX
  Quantum}\ }\textbf {\bibinfo {volume} {2}},\ \bibinfo {pages} {040316}
  (\bibinfo {year} {2021})}\BibitemShut {NoStop}%
\bibitem [{\citenamefont {Wang}\ \emph {et~al.}(2021)\citenamefont {Wang},
  \citenamefont {Fontana}, \citenamefont {Cerezo}, \citenamefont {Sharma},
  \citenamefont {Sone}, \citenamefont {Cincio},\ and\ \citenamefont
  {Coles}}]{wang2021noise}%
  \BibitemOpen
  \bibfield  {author} {\bibinfo {author} {\bibfnamefont {S.}~\bibnamefont
  {Wang}}, \bibinfo {author} {\bibfnamefont {E.}~\bibnamefont {Fontana}},
  \bibinfo {author} {\bibfnamefont {M.}~\bibnamefont {Cerezo}}, \bibinfo
  {author} {\bibfnamefont {K.}~\bibnamefont {Sharma}}, \bibinfo {author}
  {\bibfnamefont {A.}~\bibnamefont {Sone}}, \bibinfo {author} {\bibfnamefont
  {L.}~\bibnamefont {Cincio}}, \ and\ \bibinfo {author} {\bibfnamefont {P.~J.}\
  \bibnamefont {Coles}},\ }\href@noop {} {\bibfield  {journal} {\bibinfo
  {journal} {Nature communications}\ }\textbf {\bibinfo {volume} {12}},\
  \bibinfo {pages} {1} (\bibinfo {year} {2021})}\BibitemShut {NoStop}%
\bibitem [{\citenamefont {Wiersema}\ \emph {et~al.}(2020)\citenamefont
  {Wiersema}, \citenamefont {Zhou}, \citenamefont {de~Sereville}, \citenamefont
  {Carrasquilla}, \citenamefont {Kim},\ and\ \citenamefont
  {Yuen}}]{wiersema2020exploring}%
  \BibitemOpen
  \bibfield  {author} {\bibinfo {author} {\bibfnamefont {R.}~\bibnamefont
  {Wiersema}}, \bibinfo {author} {\bibfnamefont {C.}~\bibnamefont {Zhou}},
  \bibinfo {author} {\bibfnamefont {Y.}~\bibnamefont {de~Sereville}}, \bibinfo
  {author} {\bibfnamefont {J.~F.}\ \bibnamefont {Carrasquilla}}, \bibinfo
  {author} {\bibfnamefont {Y.~B.}\ \bibnamefont {Kim}}, \ and\ \bibinfo
  {author} {\bibfnamefont {H.}~\bibnamefont {Yuen}},\ }\href@noop {} {\bibfield
   {journal} {\bibinfo  {journal} {PRX Quantum}\ }\textbf {\bibinfo {volume}
  {1}},\ \bibinfo {pages} {020319} (\bibinfo {year} {2020})}\BibitemShut
  {NoStop}%
\bibitem [{\citenamefont {Volkoff}\ and\ \citenamefont
  {Coles}(2021)}]{volkoff2021large}%
  \BibitemOpen
  \bibfield  {author} {\bibinfo {author} {\bibfnamefont {T.}~\bibnamefont
  {Volkoff}}\ and\ \bibinfo {author} {\bibfnamefont {P.~J.}\ \bibnamefont
  {Coles}},\ }\href@noop {} {\bibfield  {journal} {\bibinfo  {journal} {Quantum
  Science and Technology}\ }\textbf {\bibinfo {volume} {6}},\ \bibinfo {pages}
  {025008} (\bibinfo {year} {2021})}\BibitemShut {NoStop}%
\bibitem [{\citenamefont {Patti}\ \emph {et~al.}(2021)\citenamefont {Patti},
  \citenamefont {Najafi}, \citenamefont {Gao},\ and\ \citenamefont
  {Yelin}}]{patti2021entanglement}%
  \BibitemOpen
  \bibfield  {author} {\bibinfo {author} {\bibfnamefont {T.~L.}\ \bibnamefont
  {Patti}}, \bibinfo {author} {\bibfnamefont {K.}~\bibnamefont {Najafi}},
  \bibinfo {author} {\bibfnamefont {X.}~\bibnamefont {Gao}}, \ and\ \bibinfo
  {author} {\bibfnamefont {S.~F.}\ \bibnamefont {Yelin}},\ }\href@noop {}
  {\bibfield  {journal} {\bibinfo  {journal} {Physical Review Research}\
  }\textbf {\bibinfo {volume} {3}},\ \bibinfo {pages} {033090} (\bibinfo {year}
  {2021})}\BibitemShut {NoStop}%
\bibitem [{\citenamefont {Sack}\ \emph {et~al.}(2022)\citenamefont {Sack},
  \citenamefont {Medina}, \citenamefont {Michailidis}, \citenamefont {Kueng},\
  and\ \citenamefont {Serbyn}}]{sack2022avoiding}%
  \BibitemOpen
  \bibfield  {author} {\bibinfo {author} {\bibfnamefont {S.~H.}\ \bibnamefont
  {Sack}}, \bibinfo {author} {\bibfnamefont {R.~A.}\ \bibnamefont {Medina}},
  \bibinfo {author} {\bibfnamefont {A.~A.}\ \bibnamefont {Michailidis}},
  \bibinfo {author} {\bibfnamefont {R.}~\bibnamefont {Kueng}}, \ and\ \bibinfo
  {author} {\bibfnamefont {M.}~\bibnamefont {Serbyn}},\ }\href@noop {}
  {\bibfield  {journal} {\bibinfo  {journal} {arXiv preprint arXiv:2201.08194}\
  } (\bibinfo {year} {2022})}\BibitemShut {NoStop}%
\bibitem [{\citenamefont {Larocca}\ \emph
  {et~al.}(2021{\natexlab{b}})\citenamefont {Larocca}, \citenamefont {Ju},
  \citenamefont {Garc{\'\i}a-Mart{\'\i}n}, \citenamefont {Coles},\ and\
  \citenamefont {Cerezo}}]{larocca2021theory}%
  \BibitemOpen
  \bibfield  {author} {\bibinfo {author} {\bibfnamefont {M.}~\bibnamefont
  {Larocca}}, \bibinfo {author} {\bibfnamefont {N.}~\bibnamefont {Ju}},
  \bibinfo {author} {\bibfnamefont {D.}~\bibnamefont
  {Garc{\'\i}a-Mart{\'\i}n}}, \bibinfo {author} {\bibfnamefont {P.~J.}\
  \bibnamefont {Coles}}, \ and\ \bibinfo {author} {\bibfnamefont
  {M.}~\bibnamefont {Cerezo}},\ }\href@noop {} {\bibfield  {journal} {\bibinfo
  {journal} {arXiv preprint arXiv:2109.11676}\ } (\bibinfo {year}
  {2021}{\natexlab{b}})}\BibitemShut {NoStop}%
\bibitem [{\citenamefont {Haug}\ \emph {et~al.}(2021)\citenamefont {Haug},
  \citenamefont {Bharti},\ and\ \citenamefont {Kim}}]{haug2021capacity}%
  \BibitemOpen
  \bibfield  {author} {\bibinfo {author} {\bibfnamefont {T.}~\bibnamefont
  {Haug}}, \bibinfo {author} {\bibfnamefont {K.}~\bibnamefont {Bharti}}, \ and\
  \bibinfo {author} {\bibfnamefont {M.}~\bibnamefont {Kim}},\ }\href@noop {}
  {\bibfield  {journal} {\bibinfo  {journal} {PRX Quantum}\ }\textbf {\bibinfo
  {volume} {2}},\ \bibinfo {pages} {040309} (\bibinfo {year}
  {2021})}\BibitemShut {NoStop}%
\bibitem [{\citenamefont {Schuld}\ \emph {et~al.}(2021)\citenamefont {Schuld},
  \citenamefont {Sweke},\ and\ \citenamefont {Meyer}}]{schuld2021effect}%
  \BibitemOpen
  \bibfield  {author} {\bibinfo {author} {\bibfnamefont {M.}~\bibnamefont
  {Schuld}}, \bibinfo {author} {\bibfnamefont {R.}~\bibnamefont {Sweke}}, \
  and\ \bibinfo {author} {\bibfnamefont {J.~J.}\ \bibnamefont {Meyer}},\
  }\href@noop {} {\bibfield  {journal} {\bibinfo  {journal} {Physical Review
  A}\ }\textbf {\bibinfo {volume} {103}},\ \bibinfo {pages} {032430} (\bibinfo
  {year} {2021})}\BibitemShut {NoStop}%
\bibitem [{\citenamefont {Fontana}\ \emph {et~al.}(2022)\citenamefont
  {Fontana}, \citenamefont {Rungger}, \citenamefont {Duncan},\ and\
  \citenamefont {C{\^\i}rstoiu}}]{fontana2022spectral}%
  \BibitemOpen
  \bibfield  {author} {\bibinfo {author} {\bibfnamefont {E.}~\bibnamefont
  {Fontana}}, \bibinfo {author} {\bibfnamefont {I.}~\bibnamefont {Rungger}},
  \bibinfo {author} {\bibfnamefont {R.}~\bibnamefont {Duncan}}, \ and\ \bibinfo
  {author} {\bibfnamefont {C.}~\bibnamefont {C{\^\i}rstoiu}},\ }\href@noop {}
  {\bibfield  {journal} {\bibinfo  {journal} {arXiv preprint arXiv:2206.08811}\
  } (\bibinfo {year} {2022})}\BibitemShut {NoStop}%
\bibitem [{\citenamefont {Vidal}\ and\ \citenamefont
  {Theis}(2018)}]{vidal2018calculus}%
  \BibitemOpen
  \bibfield  {author} {\bibinfo {author} {\bibfnamefont {J.~G.}\ \bibnamefont
  {Vidal}}\ and\ \bibinfo {author} {\bibfnamefont {D.~O.}\ \bibnamefont
  {Theis}},\ }\href@noop {} {\bibfield  {journal} {\bibinfo  {journal} {arXiv
  preprint arXiv:1812.06323}\ } (\bibinfo {year} {2018})}\BibitemShut {NoStop}%
\bibitem [{\citenamefont {Nakanishi}\ \emph {et~al.}(2020)\citenamefont
  {Nakanishi}, \citenamefont {Fujii},\ and\ \citenamefont
  {Todo}}]{nakanishi2020sequential}%
  \BibitemOpen
  \bibfield  {author} {\bibinfo {author} {\bibfnamefont {K.~M.}\ \bibnamefont
  {Nakanishi}}, \bibinfo {author} {\bibfnamefont {K.}~\bibnamefont {Fujii}}, \
  and\ \bibinfo {author} {\bibfnamefont {S.}~\bibnamefont {Todo}},\ }\href@noop
  {} {\bibfield  {journal} {\bibinfo  {journal} {Physical Review Research}\
  }\textbf {\bibinfo {volume} {2}},\ \bibinfo {pages} {043158} (\bibinfo {year}
  {2020})}\BibitemShut {NoStop}%
\bibitem [{\citenamefont {Parrish}\ \emph {et~al.}(2019)\citenamefont
  {Parrish}, \citenamefont {Iosue}, \citenamefont {Ozaeta},\ and\ \citenamefont
  {McMahon}}]{parrish2019jacobi}%
  \BibitemOpen
  \bibfield  {author} {\bibinfo {author} {\bibfnamefont {R.~M.}\ \bibnamefont
  {Parrish}}, \bibinfo {author} {\bibfnamefont {J.~T.}\ \bibnamefont {Iosue}},
  \bibinfo {author} {\bibfnamefont {A.}~\bibnamefont {Ozaeta}}, \ and\ \bibinfo
  {author} {\bibfnamefont {P.~L.}\ \bibnamefont {McMahon}},\ }\href@noop {}
  {\bibfield  {journal} {\bibinfo  {journal} {arXiv preprint arXiv:1904.03206}\
  } (\bibinfo {year} {2019})}\BibitemShut {NoStop}%
\bibitem [{\citenamefont {Ostaszewski}\ \emph {et~al.}(2021)\citenamefont
  {Ostaszewski}, \citenamefont {Grant},\ and\ \citenamefont
  {Benedetti}}]{ostaszewski2021structure}%
  \BibitemOpen
  \bibfield  {author} {\bibinfo {author} {\bibfnamefont {M.}~\bibnamefont
  {Ostaszewski}}, \bibinfo {author} {\bibfnamefont {E.}~\bibnamefont {Grant}},
  \ and\ \bibinfo {author} {\bibfnamefont {M.}~\bibnamefont {Benedetti}},\
  }\href@noop {} {\bibfield  {journal} {\bibinfo  {journal} {Quantum}\ }\textbf
  {\bibinfo {volume} {5}},\ \bibinfo {pages} {391} (\bibinfo {year}
  {2021})}\BibitemShut {NoStop}%
\bibitem [{\citenamefont {Koczor}\ and\ \citenamefont
  {Benjamin}(2022)}]{koczor2022quantum}%
  \BibitemOpen
  \bibfield  {author} {\bibinfo {author} {\bibfnamefont {B.}~\bibnamefont
  {Koczor}}\ and\ \bibinfo {author} {\bibfnamefont {S.~C.}\ \bibnamefont
  {Benjamin}},\ }\href@noop {} {\bibfield  {journal} {\bibinfo  {journal}
  {Physical Review Research}\ }\textbf {\bibinfo {volume} {4}},\ \bibinfo
  {pages} {023017} (\bibinfo {year} {2022})}\BibitemShut {NoStop}%
\bibitem [{\citenamefont {Gross}\ \emph {et~al.}(2010)\citenamefont {Gross},
  \citenamefont {Liu}, \citenamefont {Flammia}, \citenamefont {Becker},\ and\
  \citenamefont {Eisert}}]{gross2010quantum}%
  \BibitemOpen
  \bibfield  {author} {\bibinfo {author} {\bibfnamefont {D.}~\bibnamefont
  {Gross}}, \bibinfo {author} {\bibfnamefont {Y.-K.}\ \bibnamefont {Liu}},
  \bibinfo {author} {\bibfnamefont {S.~T.}\ \bibnamefont {Flammia}}, \bibinfo
  {author} {\bibfnamefont {S.}~\bibnamefont {Becker}}, \ and\ \bibinfo {author}
  {\bibfnamefont {J.}~\bibnamefont {Eisert}},\ }\href@noop {} {\bibfield
  {journal} {\bibinfo  {journal} {Physical review letters}\ }\textbf {\bibinfo
  {volume} {105}},\ \bibinfo {pages} {150401} (\bibinfo {year}
  {2010})}\BibitemShut {NoStop}%
\bibitem [{\citenamefont {Flammia}\ \emph {et~al.}(2012)\citenamefont
  {Flammia}, \citenamefont {Gross}, \citenamefont {Liu},\ and\ \citenamefont
  {Eisert}}]{flammia2012quantum}%
  \BibitemOpen
  \bibfield  {author} {\bibinfo {author} {\bibfnamefont {S.~T.}\ \bibnamefont
  {Flammia}}, \bibinfo {author} {\bibfnamefont {D.}~\bibnamefont {Gross}},
  \bibinfo {author} {\bibfnamefont {Y.-K.}\ \bibnamefont {Liu}}, \ and\
  \bibinfo {author} {\bibfnamefont {J.}~\bibnamefont {Eisert}},\ }\href@noop {}
  {\bibfield  {journal} {\bibinfo  {journal} {New Journal of Physics}\ }\textbf
  {\bibinfo {volume} {14}},\ \bibinfo {pages} {095022} (\bibinfo {year}
  {2012})}\BibitemShut {NoStop}%
\bibitem [{\citenamefont {Somma}(2019)}]{somma2019quantum}%
  \BibitemOpen
  \bibfield  {author} {\bibinfo {author} {\bibfnamefont {R.~D.}\ \bibnamefont
  {Somma}},\ }\href@noop {} {\bibfield  {journal} {\bibinfo  {journal} {New
  Journal of Physics}\ }\textbf {\bibinfo {volume} {21}},\ \bibinfo {pages}
  {123025} (\bibinfo {year} {2019})}\BibitemShut {NoStop}%
\bibitem [{\citenamefont {Donoho}(2006)}]{donoho2006compressed}%
  \BibitemOpen
  \bibfield  {author} {\bibinfo {author} {\bibfnamefont {D.~L.}\ \bibnamefont
  {Donoho}},\ }\href@noop {} {\bibfield  {journal} {\bibinfo  {journal} {IEEE
  Transactions on information theory}\ }\textbf {\bibinfo {volume} {52}},\
  \bibinfo {pages} {1289} (\bibinfo {year} {2006})}\BibitemShut {NoStop}%
\bibitem [{\citenamefont {Chen}\ \emph {et~al.}(2001)\citenamefont {Chen},
  \citenamefont {Donoho},\ and\ \citenamefont {Saunders}}]{chen2001atomic}%
  \BibitemOpen
  \bibfield  {author} {\bibinfo {author} {\bibfnamefont {S.~S.}\ \bibnamefont
  {Chen}}, \bibinfo {author} {\bibfnamefont {D.~L.}\ \bibnamefont {Donoho}}, \
  and\ \bibinfo {author} {\bibfnamefont {M.~A.}\ \bibnamefont {Saunders}},\
  }\href@noop {} {\bibfield  {journal} {\bibinfo  {journal} {SIAM review}\
  }\textbf {\bibinfo {volume} {43}},\ \bibinfo {pages} {129} (\bibinfo {year}
  {2001})}\BibitemShut {NoStop}%
\bibitem [{\citenamefont {Farhi}\ \emph {et~al.}(2014)\citenamefont {Farhi},
  \citenamefont {Goldstone},\ and\ \citenamefont {Gutmann}}]{farhi2014quantum}%
  \BibitemOpen
  \bibfield  {author} {\bibinfo {author} {\bibfnamefont {E.}~\bibnamefont
  {Farhi}}, \bibinfo {author} {\bibfnamefont {J.}~\bibnamefont {Goldstone}}, \
  and\ \bibinfo {author} {\bibfnamefont {S.}~\bibnamefont {Gutmann}},\
  }\href@noop {} {\bibfield  {journal} {\bibinfo  {journal} {arXiv preprint
  arXiv:1411.4028}\ } (\bibinfo {year} {2014})}\BibitemShut {NoStop}%
\bibitem [{\citenamefont {Karalekas}\ \emph {et~al.}(2020)\citenamefont
  {Karalekas}, \citenamefont {Tezak}, \citenamefont {Peterson}, \citenamefont
  {Ryan}, \citenamefont {da~Silva},\ and\ \citenamefont
  {Smith}}]{karalekas2020quantum}%
  \BibitemOpen
  \bibfield  {author} {\bibinfo {author} {\bibfnamefont {P.~J.}\ \bibnamefont
  {Karalekas}}, \bibinfo {author} {\bibfnamefont {N.~A.}\ \bibnamefont
  {Tezak}}, \bibinfo {author} {\bibfnamefont {E.~C.}\ \bibnamefont {Peterson}},
  \bibinfo {author} {\bibfnamefont {C.~A.}\ \bibnamefont {Ryan}}, \bibinfo
  {author} {\bibfnamefont {M.~P.}\ \bibnamefont {da~Silva}}, \ and\ \bibinfo
  {author} {\bibfnamefont {R.~S.}\ \bibnamefont {Smith}},\ }\href@noop {}
  {\bibfield  {journal} {\bibinfo  {journal} {Quantum Science and Technology}\
  }\textbf {\bibinfo {volume} {5}},\ \bibinfo {pages} {024003} (\bibinfo {year}
  {2020})}\BibitemShut {NoStop}%
\bibitem [{\citenamefont {Schuld}\ \emph {et~al.}(2019)\citenamefont {Schuld},
  \citenamefont {Bergholm}, \citenamefont {Gogolin}, \citenamefont {Izaac},\
  and\ \citenamefont {Killoran}}]{schuld2019evaluating}%
  \BibitemOpen
  \bibfield  {author} {\bibinfo {author} {\bibfnamefont {M.}~\bibnamefont
  {Schuld}}, \bibinfo {author} {\bibfnamefont {V.}~\bibnamefont {Bergholm}},
  \bibinfo {author} {\bibfnamefont {C.}~\bibnamefont {Gogolin}}, \bibinfo
  {author} {\bibfnamefont {J.}~\bibnamefont {Izaac}}, \ and\ \bibinfo {author}
  {\bibfnamefont {N.}~\bibnamefont {Killoran}},\ }\href@noop {} {\bibfield
  {journal} {\bibinfo  {journal} {Physical Review A}\ }\textbf {\bibinfo
  {volume} {99}},\ \bibinfo {pages} {032331} (\bibinfo {year}
  {2019})}\BibitemShut {NoStop}%
\bibitem [{\citenamefont {Kandala}\ \emph {et~al.}(2017)\citenamefont
  {Kandala}, \citenamefont {Mezzacapo}, \citenamefont {Temme}, \citenamefont
  {Takita}, \citenamefont {Brink}, \citenamefont {Chow},\ and\ \citenamefont
  {Gambetta}}]{kandala2017hardware}%
  \BibitemOpen
  \bibfield  {author} {\bibinfo {author} {\bibfnamefont {A.}~\bibnamefont
  {Kandala}}, \bibinfo {author} {\bibfnamefont {A.}~\bibnamefont {Mezzacapo}},
  \bibinfo {author} {\bibfnamefont {K.}~\bibnamefont {Temme}}, \bibinfo
  {author} {\bibfnamefont {M.}~\bibnamefont {Takita}}, \bibinfo {author}
  {\bibfnamefont {M.}~\bibnamefont {Brink}}, \bibinfo {author} {\bibfnamefont
  {J.~M.}\ \bibnamefont {Chow}}, \ and\ \bibinfo {author} {\bibfnamefont
  {J.~M.}\ \bibnamefont {Gambetta}},\ }\href@noop {} {\bibfield  {journal}
  {\bibinfo  {journal} {Nature}\ }\textbf {\bibinfo {volume} {549}},\ \bibinfo
  {pages} {242} (\bibinfo {year} {2017})}\BibitemShut {NoStop}%
\bibitem [{\citenamefont {Romero}\ \emph {et~al.}(2018)\citenamefont {Romero},
  \citenamefont {Babbush}, \citenamefont {McClean}, \citenamefont {Hempel},
  \citenamefont {Love},\ and\ \citenamefont
  {Aspuru-Guzik}}]{romero2018strategies}%
  \BibitemOpen
  \bibfield  {author} {\bibinfo {author} {\bibfnamefont {J.}~\bibnamefont
  {Romero}}, \bibinfo {author} {\bibfnamefont {R.}~\bibnamefont {Babbush}},
  \bibinfo {author} {\bibfnamefont {J.~R.}\ \bibnamefont {McClean}}, \bibinfo
  {author} {\bibfnamefont {C.}~\bibnamefont {Hempel}}, \bibinfo {author}
  {\bibfnamefont {P.~J.}\ \bibnamefont {Love}}, \ and\ \bibinfo {author}
  {\bibfnamefont {A.}~\bibnamefont {Aspuru-Guzik}},\ }\href@noop {} {\bibfield
  {journal} {\bibinfo  {journal} {Quantum Science and Technology}\ }\textbf
  {\bibinfo {volume} {4}},\ \bibinfo {pages} {014008} (\bibinfo {year}
  {2018})}\BibitemShut {NoStop}%
\bibitem [{\citenamefont {McClean}\ \emph {et~al.}(2016)\citenamefont
  {McClean}, \citenamefont {Romero}, \citenamefont {Babbush},\ and\
  \citenamefont {Aspuru-Guzik}}]{mcclean2016theory}%
  \BibitemOpen
  \bibfield  {author} {\bibinfo {author} {\bibfnamefont {J.~R.}\ \bibnamefont
  {McClean}}, \bibinfo {author} {\bibfnamefont {J.}~\bibnamefont {Romero}},
  \bibinfo {author} {\bibfnamefont {R.}~\bibnamefont {Babbush}}, \ and\
  \bibinfo {author} {\bibfnamefont {A.}~\bibnamefont {Aspuru-Guzik}},\
  }\href@noop {} {\bibfield  {journal} {\bibinfo  {journal} {New Journal of
  Physics}\ }\textbf {\bibinfo {volume} {18}},\ \bibinfo {pages} {023023}
  (\bibinfo {year} {2016})}\BibitemShut {NoStop}%
\bibitem [{\citenamefont {Ryabinkin}\ \emph {et~al.}(2018)\citenamefont
  {Ryabinkin}, \citenamefont {Yen}, \citenamefont {Genin},\ and\ \citenamefont
  {Izmaylov}}]{ryabinkin2018qubit}%
  \BibitemOpen
  \bibfield  {author} {\bibinfo {author} {\bibfnamefont {I.~G.}\ \bibnamefont
  {Ryabinkin}}, \bibinfo {author} {\bibfnamefont {T.-C.}\ \bibnamefont {Yen}},
  \bibinfo {author} {\bibfnamefont {S.~N.}\ \bibnamefont {Genin}}, \ and\
  \bibinfo {author} {\bibfnamefont {A.~F.}\ \bibnamefont {Izmaylov}},\
  }\href@noop {} {\bibfield  {journal} {\bibinfo  {journal} {Journal of
  chemical theory and computation}\ }\textbf {\bibinfo {volume} {14}},\
  \bibinfo {pages} {6317} (\bibinfo {year} {2018})}\BibitemShut {NoStop}%
\bibitem [{\citenamefont {Grimsley}\ \emph {et~al.}(2019)\citenamefont
  {Grimsley}, \citenamefont {Economou}, \citenamefont {Barnes},\ and\
  \citenamefont {Mayhall}}]{grimsley2019adaptive}%
  \BibitemOpen
  \bibfield  {author} {\bibinfo {author} {\bibfnamefont {H.~R.}\ \bibnamefont
  {Grimsley}}, \bibinfo {author} {\bibfnamefont {S.~E.}\ \bibnamefont
  {Economou}}, \bibinfo {author} {\bibfnamefont {E.}~\bibnamefont {Barnes}}, \
  and\ \bibinfo {author} {\bibfnamefont {N.~J.}\ \bibnamefont {Mayhall}},\
  }\href@noop {} {\bibfield  {journal} {\bibinfo  {journal} {Nature
  communications}\ }\textbf {\bibinfo {volume} {10}},\ \bibinfo {pages} {1}
  (\bibinfo {year} {2019})}\BibitemShut {NoStop}%
\bibitem [{\citenamefont {Tang}\ \emph {et~al.}(2021)\citenamefont {Tang},
  \citenamefont {Shkolnikov}, \citenamefont {Barron}, \citenamefont {Grimsley},
  \citenamefont {Mayhall}, \citenamefont {Barnes},\ and\ \citenamefont
  {Economou}}]{tang2021qubit}%
  \BibitemOpen
  \bibfield  {author} {\bibinfo {author} {\bibfnamefont {H.~L.}\ \bibnamefont
  {Tang}}, \bibinfo {author} {\bibfnamefont {V.}~\bibnamefont {Shkolnikov}},
  \bibinfo {author} {\bibfnamefont {G.~S.}\ \bibnamefont {Barron}}, \bibinfo
  {author} {\bibfnamefont {H.~R.}\ \bibnamefont {Grimsley}}, \bibinfo {author}
  {\bibfnamefont {N.~J.}\ \bibnamefont {Mayhall}}, \bibinfo {author}
  {\bibfnamefont {E.}~\bibnamefont {Barnes}}, \ and\ \bibinfo {author}
  {\bibfnamefont {S.~E.}\ \bibnamefont {Economou}},\ }\href@noop {} {\bibfield
  {journal} {\bibinfo  {journal} {PRX Quantum}\ }\textbf {\bibinfo {volume}
  {2}},\ \bibinfo {pages} {020310} (\bibinfo {year} {2021})}\BibitemShut
  {NoStop}%
\bibitem [{\citenamefont {Lee}\ \emph {et~al.}(2018)\citenamefont {Lee},
  \citenamefont {Huggins}, \citenamefont {Head-Gordon},\ and\ \citenamefont
  {Whaley}}]{lee2018generalized}%
  \BibitemOpen
  \bibfield  {author} {\bibinfo {author} {\bibfnamefont {J.}~\bibnamefont
  {Lee}}, \bibinfo {author} {\bibfnamefont {W.~J.}\ \bibnamefont {Huggins}},
  \bibinfo {author} {\bibfnamefont {M.}~\bibnamefont {Head-Gordon}}, \ and\
  \bibinfo {author} {\bibfnamefont {K.~B.}\ \bibnamefont {Whaley}},\
  }\href@noop {} {\bibfield  {journal} {\bibinfo  {journal} {Journal of
  chemical theory and computation}\ }\textbf {\bibinfo {volume} {15}},\
  \bibinfo {pages} {311} (\bibinfo {year} {2018})}\BibitemShut {NoStop}%
\bibitem [{\citenamefont {Moussa}\ \emph {et~al.}(2022)\citenamefont {Moussa},
  \citenamefont {Wang}, \citenamefont {B{\"a}ck},\ and\ \citenamefont
  {Dunjko}}]{moussa2022unsupervised}%
  \BibitemOpen
  \bibfield  {author} {\bibinfo {author} {\bibfnamefont {C.}~\bibnamefont
  {Moussa}}, \bibinfo {author} {\bibfnamefont {H.}~\bibnamefont {Wang}},
  \bibinfo {author} {\bibfnamefont {T.}~\bibnamefont {B{\"a}ck}}, \ and\
  \bibinfo {author} {\bibfnamefont {V.}~\bibnamefont {Dunjko}},\ }\href@noop {}
  {\bibfield  {journal} {\bibinfo  {journal} {EPJ Quantum Technology}\ }\textbf
  {\bibinfo {volume} {9}},\ \bibinfo {pages} {11} (\bibinfo {year}
  {2022})}\BibitemShut {NoStop}%
\bibitem [{\citenamefont {Kapralov}(2016)}]{kapralov2016sparse}%
  \BibitemOpen
  \bibfield  {author} {\bibinfo {author} {\bibfnamefont {M.}~\bibnamefont
  {Kapralov}},\ }in\ \href@noop {} {\emph {\bibinfo {booktitle} {Proceedings of
  the forty-eighth annual ACM symposium on Theory of Computing}}}\ (\bibinfo
  {year} {2016})\ pp.\ \bibinfo {pages} {264--277}\BibitemShut {NoStop}%
\bibitem [{\citenamefont {Mallat}\ and\ \citenamefont
  {Zhang}(1993)}]{mallat1993matching}%
  \BibitemOpen
  \bibfield  {author} {\bibinfo {author} {\bibfnamefont {S.~G.}\ \bibnamefont
  {Mallat}}\ and\ \bibinfo {author} {\bibfnamefont {Z.}~\bibnamefont {Zhang}},\
  }\href@noop {} {\bibfield  {journal} {\bibinfo  {journal} {IEEE Transactions
  on signal processing}\ }\textbf {\bibinfo {volume} {41}},\ \bibinfo {pages}
  {3397} (\bibinfo {year} {1993})}\BibitemShut {NoStop}%
\bibitem [{\citenamefont {Pati}\ \emph {et~al.}(1993)\citenamefont {Pati},
  \citenamefont {Rezaiifar},\ and\ \citenamefont
  {Krishnaprasad}}]{pati1993orthogonal}%
  \BibitemOpen
  \bibfield  {author} {\bibinfo {author} {\bibfnamefont {Y.~C.}\ \bibnamefont
  {Pati}}, \bibinfo {author} {\bibfnamefont {R.}~\bibnamefont {Rezaiifar}}, \
  and\ \bibinfo {author} {\bibfnamefont {P.~S.}\ \bibnamefont
  {Krishnaprasad}},\ }in\ \href@noop {} {\emph {\bibinfo {booktitle}
  {Proceedings of 27th Asilomar conference on signals, systems and
  computers}}}\ (\bibinfo {organization} {IEEE},\ \bibinfo {year} {1993})\ pp.\
  \bibinfo {pages} {40--44}\BibitemShut {NoStop}%
\bibitem [{\citenamefont {Kunis}\ and\ \citenamefont
  {Rauhut}(2008)}]{kunis2008random}%
  \BibitemOpen
  \bibfield  {author} {\bibinfo {author} {\bibfnamefont {S.}~\bibnamefont
  {Kunis}}\ and\ \bibinfo {author} {\bibfnamefont {H.}~\bibnamefont {Rauhut}},\
  }\href@noop {} {\bibfield  {journal} {\bibinfo  {journal} {Foundations of
  Computational Mathematics}\ }\textbf {\bibinfo {volume} {8}},\ \bibinfo
  {pages} {737} (\bibinfo {year} {2008})}\BibitemShut {NoStop}%
\bibitem [{\citenamefont {Donoho}\ and\ \citenamefont
  {Stark}(1989)}]{donoho1989uncertainty}%
  \BibitemOpen
  \bibfield  {author} {\bibinfo {author} {\bibfnamefont {D.~L.}\ \bibnamefont
  {Donoho}}\ and\ \bibinfo {author} {\bibfnamefont {P.~B.}\ \bibnamefont
  {Stark}},\ }\href@noop {} {\bibfield  {journal} {\bibinfo  {journal} {SIAM
  Journal on Applied Mathematics}\ }\textbf {\bibinfo {volume} {49}},\ \bibinfo
  {pages} {906} (\bibinfo {year} {1989})}\BibitemShut {NoStop}%
\bibitem [{\citenamefont {Rauhut}(2007)}]{rauhut2007random}%
  \BibitemOpen
  \bibfield  {author} {\bibinfo {author} {\bibfnamefont {H.}~\bibnamefont
  {Rauhut}},\ }\href@noop {} {\bibfield  {journal} {\bibinfo  {journal}
  {Applied and Computational Harmonic Analysis}\ }\textbf {\bibinfo {volume}
  {22}},\ \bibinfo {pages} {16} (\bibinfo {year} {2007})}\BibitemShut {NoStop}%
\bibitem [{\citenamefont {Suzuki}\ \emph {et~al.}(2021)\citenamefont {Suzuki},
  \citenamefont {Kawase}, \citenamefont {Masumura}, \citenamefont {Hiraga},
  \citenamefont {Nakadai}, \citenamefont {Chen}, \citenamefont {Nakanishi},
  \citenamefont {Mitarai}, \citenamefont {Imai}, \citenamefont {Tamiya} \emph
  {et~al.}}]{suzuki2021qulacs}%
  \BibitemOpen
  \bibfield  {author} {\bibinfo {author} {\bibfnamefont {Y.}~\bibnamefont
  {Suzuki}}, \bibinfo {author} {\bibfnamefont {Y.}~\bibnamefont {Kawase}},
  \bibinfo {author} {\bibfnamefont {Y.}~\bibnamefont {Masumura}}, \bibinfo
  {author} {\bibfnamefont {Y.}~\bibnamefont {Hiraga}}, \bibinfo {author}
  {\bibfnamefont {M.}~\bibnamefont {Nakadai}}, \bibinfo {author} {\bibfnamefont
  {J.}~\bibnamefont {Chen}}, \bibinfo {author} {\bibfnamefont {K.~M.}\
  \bibnamefont {Nakanishi}}, \bibinfo {author} {\bibfnamefont {K.}~\bibnamefont
  {Mitarai}}, \bibinfo {author} {\bibfnamefont {R.}~\bibnamefont {Imai}},
  \bibinfo {author} {\bibfnamefont {S.}~\bibnamefont {Tamiya}},  \emph
  {et~al.},\ }\href@noop {} {\bibfield  {journal} {\bibinfo  {journal}
  {Quantum}\ }\textbf {\bibinfo {volume} {5}},\ \bibinfo {pages} {559}
  (\bibinfo {year} {2021})}\BibitemShut {NoStop}%
\bibitem [{\citenamefont {Beck}\ and\ \citenamefont
  {Teboulle}(2009)}]{beck2009fast}%
  \BibitemOpen
  \bibfield  {author} {\bibinfo {author} {\bibfnamefont {A.}~\bibnamefont
  {Beck}}\ and\ \bibinfo {author} {\bibfnamefont {M.}~\bibnamefont
  {Teboulle}},\ }\href@noop {} {\bibfield  {journal} {\bibinfo  {journal} {SIAM
  journal on imaging sciences}\ }\textbf {\bibinfo {volume} {2}},\ \bibinfo
  {pages} {183} (\bibinfo {year} {2009})}\BibitemShut {NoStop}%
\bibitem [{\citenamefont {Zhou}\ \emph {et~al.}(2020)\citenamefont {Zhou},
  \citenamefont {Stoudenmire},\ and\ \citenamefont {Waintal}}]{zhou2020limits}%
  \BibitemOpen
  \bibfield  {author} {\bibinfo {author} {\bibfnamefont {Y.}~\bibnamefont
  {Zhou}}, \bibinfo {author} {\bibfnamefont {E.~M.}\ \bibnamefont
  {Stoudenmire}}, \ and\ \bibinfo {author} {\bibfnamefont {X.}~\bibnamefont
  {Waintal}},\ }\href@noop {} {\bibfield  {journal} {\bibinfo  {journal}
  {Physical Review X}\ }\textbf {\bibinfo {volume} {10}},\ \bibinfo {pages}
  {041038} (\bibinfo {year} {2020})}\BibitemShut {NoStop}%
\bibitem [{\citenamefont {J{\'o}nsson}\ \emph {et~al.}(2018)\citenamefont
  {J{\'o}nsson}, \citenamefont {Bauer},\ and\ \citenamefont
  {Carleo}}]{jonsson2018neural}%
  \BibitemOpen
  \bibfield  {author} {\bibinfo {author} {\bibfnamefont {B.}~\bibnamefont
  {J{\'o}nsson}}, \bibinfo {author} {\bibfnamefont {B.}~\bibnamefont {Bauer}},
  \ and\ \bibinfo {author} {\bibfnamefont {G.}~\bibnamefont {Carleo}},\
  }\href@noop {} {\bibfield  {journal} {\bibinfo  {journal} {arXiv preprint
  arXiv:1808.05232}\ } (\bibinfo {year} {2018})}\BibitemShut {NoStop}%
\bibitem [{\citenamefont {Medvidovi{\'c}}\ and\ \citenamefont
  {Carleo}(2021)}]{medvidovic2021classical}%
  \BibitemOpen
  \bibfield  {author} {\bibinfo {author} {\bibfnamefont {M.}~\bibnamefont
  {Medvidovi{\'c}}}\ and\ \bibinfo {author} {\bibfnamefont {G.}~\bibnamefont
  {Carleo}},\ }\href@noop {} {\bibfield  {journal} {\bibinfo  {journal} {npj
  Quantum Information}\ }\textbf {\bibinfo {volume} {7}},\ \bibinfo {pages} {1}
  (\bibinfo {year} {2021})}\BibitemShut {NoStop}%
\bibitem [{\citenamefont {Hadfield}(2018)}]{hadfield2018quantum}%
  \BibitemOpen
  \bibfield  {author} {\bibinfo {author} {\bibfnamefont {S.~A.}\ \bibnamefont
  {Hadfield}},\ }\href@noop {} {\emph {\bibinfo {title} {Quantum algorithms for
  scientific computing and approximate optimization}}}\ (\bibinfo  {publisher}
  {Columbia University},\ \bibinfo {year} {2018})\BibitemShut {NoStop}%
\bibitem [{\citenamefont {Brandao}\ \emph {et~al.}(2018)\citenamefont
  {Brandao}, \citenamefont {Broughton}, \citenamefont {Farhi}, \citenamefont
  {Gutmann},\ and\ \citenamefont {Neven}}]{brandao2018fixed}%
  \BibitemOpen
  \bibfield  {author} {\bibinfo {author} {\bibfnamefont {F.~G.}\ \bibnamefont
  {Brandao}}, \bibinfo {author} {\bibfnamefont {M.}~\bibnamefont {Broughton}},
  \bibinfo {author} {\bibfnamefont {E.}~\bibnamefont {Farhi}}, \bibinfo
  {author} {\bibfnamefont {S.}~\bibnamefont {Gutmann}}, \ and\ \bibinfo
  {author} {\bibfnamefont {H.}~\bibnamefont {Neven}},\ }\href@noop {}
  {\bibfield  {journal} {\bibinfo  {journal} {arXiv preprint arXiv:1812.04170}\
  } (\bibinfo {year} {2018})}\BibitemShut {NoStop}%
\bibitem [{\citenamefont {Huang}\ \emph {et~al.}(2021)\citenamefont {Huang},
  \citenamefont {Broughton}, \citenamefont {Mohseni}, \citenamefont {Babbush},
  \citenamefont {Boixo}, \citenamefont {Neven},\ and\ \citenamefont
  {McClean}}]{huang2021power}%
  \BibitemOpen
  \bibfield  {author} {\bibinfo {author} {\bibfnamefont {H.-Y.}\ \bibnamefont
  {Huang}}, \bibinfo {author} {\bibfnamefont {M.}~\bibnamefont {Broughton}},
  \bibinfo {author} {\bibfnamefont {M.}~\bibnamefont {Mohseni}}, \bibinfo
  {author} {\bibfnamefont {R.}~\bibnamefont {Babbush}}, \bibinfo {author}
  {\bibfnamefont {S.}~\bibnamefont {Boixo}}, \bibinfo {author} {\bibfnamefont
  {H.}~\bibnamefont {Neven}}, \ and\ \bibinfo {author} {\bibfnamefont {J.~R.}\
  \bibnamefont {McClean}},\ }\href@noop {} {\bibfield  {journal} {\bibinfo
  {journal} {Nature communications}\ }\textbf {\bibinfo {volume} {12}},\
  \bibinfo {pages} {1} (\bibinfo {year} {2021})}\BibitemShut {NoStop}%
\bibitem [{\citenamefont {Schreiber}\ \emph {et~al.}(2022)\citenamefont
  {Schreiber}, \citenamefont {Eisert},\ and\ \citenamefont
  {Meyer}}]{schreiber2022classical}%
  \BibitemOpen
  \bibfield  {author} {\bibinfo {author} {\bibfnamefont {F.~J.}\ \bibnamefont
  {Schreiber}}, \bibinfo {author} {\bibfnamefont {J.}~\bibnamefont {Eisert}}, \
  and\ \bibinfo {author} {\bibfnamefont {J.~J.}\ \bibnamefont {Meyer}},\
  }\href@noop {} {\bibfield  {journal} {\bibinfo  {journal} {arXiv preprint
  arXiv:2206.11740}\ } (\bibinfo {year} {2022})}\BibitemShut {NoStop}%
\bibitem [{\citenamefont {Cirstoiu}\ and\ \citenamefont
  {Jennings}(2017)}]{cirstoiu2017global}%
  \BibitemOpen
  \bibfield  {author} {\bibinfo {author} {\bibfnamefont {C.}~\bibnamefont
  {Cirstoiu}}\ and\ \bibinfo {author} {\bibfnamefont {D.}~\bibnamefont
  {Jennings}},\ }\href@noop {} {\bibfield  {journal} {\bibinfo  {journal}
  {arXiv preprint arXiv:1707.09826}\ } (\bibinfo {year} {2017})}\BibitemShut
  {NoStop}%
\bibitem [{\citenamefont {Cand{\`e}s}\ \emph {et~al.}(2006)\citenamefont
  {Cand{\`e}s}, \citenamefont {Romberg},\ and\ \citenamefont
  {Tao}}]{candes2006robust}%
  \BibitemOpen
  \bibfield  {author} {\bibinfo {author} {\bibfnamefont {E.~J.}\ \bibnamefont
  {Cand{\`e}s}}, \bibinfo {author} {\bibfnamefont {J.}~\bibnamefont {Romberg}},
  \ and\ \bibinfo {author} {\bibfnamefont {T.}~\bibnamefont {Tao}},\
  }\href@noop {} {\bibfield  {journal} {\bibinfo  {journal} {IEEE Transactions
  on information theory}\ }\textbf {\bibinfo {volume} {52}},\ \bibinfo {pages}
  {489} (\bibinfo {year} {2006})}\BibitemShut {NoStop}%
\bibitem [{\citenamefont {Donoho}\ \emph {et~al.}(2005)\citenamefont {Donoho},
  \citenamefont {Elad},\ and\ \citenamefont {Temlyakov}}]{donoho2005stable}%
  \BibitemOpen
  \bibfield  {author} {\bibinfo {author} {\bibfnamefont {D.~L.}\ \bibnamefont
  {Donoho}}, \bibinfo {author} {\bibfnamefont {M.}~\bibnamefont {Elad}}, \ and\
  \bibinfo {author} {\bibfnamefont {V.~N.}\ \bibnamefont {Temlyakov}},\
  }\href@noop {} {\bibfield  {journal} {\bibinfo  {journal} {IEEE Transactions
  on information theory}\ }\textbf {\bibinfo {volume} {52}},\ \bibinfo {pages}
  {6} (\bibinfo {year} {2005})}\BibitemShut {NoStop}%
\bibitem [{\citenamefont {Candes}\ \emph {et~al.}(2006)\citenamefont {Candes},
  \citenamefont {Romberg},\ and\ \citenamefont {Tao}}]{candes2006stable}%
  \BibitemOpen
  \bibfield  {author} {\bibinfo {author} {\bibfnamefont {E.~J.}\ \bibnamefont
  {Candes}}, \bibinfo {author} {\bibfnamefont {J.~K.}\ \bibnamefont {Romberg}},
  \ and\ \bibinfo {author} {\bibfnamefont {T.}~\bibnamefont {Tao}},\
  }\href@noop {} {\bibfield  {journal} {\bibinfo  {journal} {Communications on
  Pure and Applied Mathematics: A Journal Issued by the Courant Institute of
  Mathematical Sciences}\ }\textbf {\bibinfo {volume} {59}},\ \bibinfo {pages}
  {1207} (\bibinfo {year} {2006})}\BibitemShut {NoStop}%
\bibitem [{\citenamefont {Tibshirani}(1996)}]{tibshirani1996regression}%
  \BibitemOpen
  \bibfield  {author} {\bibinfo {author} {\bibfnamefont {R.}~\bibnamefont
  {Tibshirani}},\ }\href@noop {} {\bibfield  {journal} {\bibinfo  {journal}
  {Journal of the Royal Statistical Society: Series B (Methodological)}\
  }\textbf {\bibinfo {volume} {58}},\ \bibinfo {pages} {267} (\bibinfo {year}
  {1996})}\BibitemShut {NoStop}%
\bibitem [{\citenamefont {Fletcher}\ and\ \citenamefont
  {Reeves}(1964)}]{fletcher1964function}%
  \BibitemOpen
  \bibfield  {author} {\bibinfo {author} {\bibfnamefont {R.}~\bibnamefont
  {Fletcher}}\ and\ \bibinfo {author} {\bibfnamefont {C.~M.}\ \bibnamefont
  {Reeves}},\ }\href@noop {} {\bibfield  {journal} {\bibinfo  {journal} {The
  computer journal}\ }\textbf {\bibinfo {volume} {7}},\ \bibinfo {pages} {149}
  (\bibinfo {year} {1964})}\BibitemShut {NoStop}%
\bibitem [{\citenamefont {Boyd}(1992)}]{boyd1992fast}%
  \BibitemOpen
  \bibfield  {author} {\bibinfo {author} {\bibfnamefont {J.~P.}\ \bibnamefont
  {Boyd}},\ }\href@noop {} {\bibfield  {journal} {\bibinfo  {journal} {Journal
  of Computational Physics}\ }\textbf {\bibinfo {volume} {103}},\ \bibinfo
  {pages} {243} (\bibinfo {year} {1992})}\BibitemShut {NoStop}%
\bibitem [{\citenamefont {Dutt}\ and\ \citenamefont
  {Rokhlin}(1993)}]{dutt1993fast}%
  \BibitemOpen
  \bibfield  {author} {\bibinfo {author} {\bibfnamefont {A.}~\bibnamefont
  {Dutt}}\ and\ \bibinfo {author} {\bibfnamefont {V.}~\bibnamefont {Rokhlin}},\
  }\href@noop {} {\bibfield  {journal} {\bibinfo  {journal} {SIAM Journal on
  Scientific computing}\ }\textbf {\bibinfo {volume} {14}},\ \bibinfo {pages}
  {1368} (\bibinfo {year} {1993})}\BibitemShut {NoStop}%
\bibitem [{\citenamefont {Ruiz-Antolin}\ and\ \citenamefont
  {Townsend}(2018)}]{ruiz2018nonuniform}%
  \BibitemOpen
  \bibfield  {author} {\bibinfo {author} {\bibfnamefont {D.}~\bibnamefont
  {Ruiz-Antolin}}\ and\ \bibinfo {author} {\bibfnamefont {A.}~\bibnamefont
  {Townsend}},\ }\href@noop {} {\bibfield  {journal} {\bibinfo  {journal} {SIAM
  Journal on Scientific Computing}\ }\textbf {\bibinfo {volume} {40}},\
  \bibinfo {pages} {A529} (\bibinfo {year} {2018})}\BibitemShut {NoStop}%
\end{thebibliography}%

\onecolumngrid
\appendix
\vspace{0.5in}

\begin{center}
  { \bf \MakeUppercase{Appendix}}
\end{center}

\section{Notation and conventions}
We use $N$ to refer to the number of qubits, $M$ for the number of parameters, $n$ for the total number of frequency modes, $m$ for the number of sampled points from a landscape. When the effective dimension of parameter space is different from the number of parameters, for examples because some parameters are correlated, we indicate the former by $d$.\\
We define a trigonometric polynomial of maximum frequency $n$ as:
\begin{equation}
f(x) = \sum_{k=-n}^n \hat{f}_k e^{i x k}
\end{equation}
The Fourier coefficients are recovered via the Discrete Fourier Transform (DFT):
\begin{equation}
\hat{f}_k = \frac{1}{2\pi} \int_{0}^{2\pi} f(x) e^{-i x k} dx
\end{equation}

\section{Fourier spectrum of quantum cost functions}
\label{app:fourier}
We start by reviewing the central result in \cite{fontana2022spectral}:
\begin{theorem}[Spectral characteristics of VQA landscapes \cite{fontana2022spectral}]
	\label{thm:1}
	Consider a parameterised quantum state $\rho(\boldsymbol{\theta})= U(\boldsymbol{\theta})\rho_0 U(\boldsymbol{\theta})\hc$ acting on $N$ qubits with $M$ independent parameters:
	\begin{equation}
		U(\boldsymbol{\theta}) = V_0 U(\theta_1)V_1 ... U(\theta_M)V_M 	
	\end{equation}
	where $U(\theta_j) = e^{iH_j \theta_j}$ are one-parameter unitaries generated by Hermitian operators $H_j$. The expectation value of any Hermitian operator $O$ with respect to $\rho(\boldsymbol{\theta})$ is a \textit{generalised $M$-variate trigonometric polynomial} of bounded degree:
	\begin{equation}
		\<O\>_{\rho(\boldsymbol{\theta})} = \sum_{\bf{k}\in \boldsymbol \Lambda}  \hat{c}_{\bf{k}}(O) e^{i \boldsymbol{\omega}_{\bf{k}} \cdot \boldsymbol{\theta}}
		\label{eq:thm1}
	\end{equation}
	where $\hat{c}_{\bf{k}} ^{*} = c_{-\bf{k}}$ and the lattice $\boldsymbol{\Lambda} \subset \mathbb{Z}_3^{r_1}\times ... \times \mathbb{Z}_3^{r_M}$   where each $r_j$ is the rank of the Walsh-Hadamard transform of the eigenvalue vector of $H_j$.\footnote{The $2^l$-dim Walsh-Hadamard transform matrix can be defined as $[W_l]_{ij} := \frac{1}{2^{l/2}}(-1)^{i\cdot j}$ where the $i\cdot j$ indicates the bitwise dot product between the binary representations of the indices.} Furthermore, the frequencies vector $\boldsymbol{\omega}_{\bf{k}} = ( (\omega_{\bf{k}})_1, ...., (\omega_{\bf{k}})_M)$ ranges over a discrete set with bounded degree $\sup_{\bf{k} \in \boldsymbol \Lambda}|(\omega_{\bf{k}})_j |\leq 2\|H_j\|_{\infty}$.
\end{theorem}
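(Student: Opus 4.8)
The plan is to pass to the superoperator (Heisenberg) picture and diagonalise each single-parameter conjugation. Write $\mathcal{U}_j(\theta_j)$ for the channel $X\mapsto e^{iH_j\theta_j}Xe^{-iH_j\theta_j}$ and $\mathcal{V}_i$ for conjugation by the fixed $V_i$, so that
\begin{equation}
  \langle O\rangle_{\rho(\boldsymbol\theta)} = \langle\!\langle O|\,\mathcal{V}_0\,\mathcal{U}_1(\theta_1)\,\mathcal{V}_1\cdots\mathcal{U}_M(\theta_M)\,\mathcal{V}_M\,|\rho_0\rangle\!\rangle
\end{equation}
in vectorised notation. The generator of $\mathcal{U}_j(\theta_j)$ is $i\,\mathrm{ad}_{H_j}=i[H_j,\cdot]$, whose eigenoperators are $|a\rangle\langle b|$ (in the eigenbasis of $H_j$) with eigenvalues $\lambda_a^{(j)}-\lambda_b^{(j)}$. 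Hence $\mathcal{U}_j(\theta_j)=\sum_{\omega\in\Omega_j}e^{i\omega\theta_j}\,\Pi_\omega^{(j)}$ is a finite sum over the difference set $\Omega_j:=\{\lambda_a^{(j)}-\lambda_b^{(j)}\}$ with spectral projectors $\Pi_\omega^{(j)}$.

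Substituting these resolutions and expanding the product gives
\begin{equation}
  \langle O\rangle_{\rho(\boldsymbol\theta)} = \sum_{\omega_1\in\Omega_1}\!\cdots\!\sum_{\omega_M\in\Omega_M} c_{\omega_1,\dots,\omega_M}\,e^{i(\omega_1\theta_1+\cdots+\omega_M\theta_M)},
\end{equation}
with $c_{\omega_1,\dots,\omega_M}=\langle\!\langle O|\mathcal{V}_0\Pi_{\omega_1}^{(1)}\mathcal{V}_1\cdots\Pi_{\omega_M}^{(M)}\mathcal{V}_M|\rho_0\rangle\!\rangle$; grouping frequencies that coincide, this is a finitely supported $M$-variate trigonometric polynomial. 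The degree bound is immediate: $|\omega_j|=|\lambda_a^{(j)}-\lambda_b^{(j)}|\le 2\max_a|\lambda_a^{(j)}|=2\|H_j\|_\infty$. The Hermiticity relation $\hat c_{\mathbf{k}}^{\,*}=\hat c_{-\mathbf{k}}$ follows because $\langle O\rangle_{\rho(\boldsymbol\theta)}$ is real (trace of a product of Hermitian operators) and each $\Omega_j$ is symmetric under negation, so uniqueness of the Fourier expansion forces $\overline{\hat c_{\mathbf{k}}}=\hat c_{-\mathbf{k}}$.

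The delicate step is identifying the index lattice $\boldsymbol\Lambda\subset\mathbb{Z}_3^{r_1}\times\cdots\times\mathbb{Z}_3^{r_M}$. Diagonalise $H_j$ and index its eigenvalues $\lambda_x^{(j)}$ by bit strings $x$; decompose the eigenvalue vector over the Walsh-Hadamard basis as $\lambda_x^{(j)}=\sum_S \tilde\lambda_S^{(j)}(-1)^{x\cdot S}$, where the $\tilde\lambda_S^{(j)}$ are (proportional to) the Walsh-Hadamard transform of that vector, and let $r_j$ be the number of nonzero coefficients $\tilde\lambda_S^{(j)}$ (the ``rank'' of the transformed vector). Then every admissible frequency in $\theta_j$ is
\begin{equation}
  \omega_j = \lambda_x^{(j)}-\lambda_y^{(j)} = \sum_{S:\,\tilde\lambda_S^{(j)}\neq 0}\tilde\lambda_S^{(j)}\big((-1)^{x\cdot S}-(-1)^{y\cdot S}\big),
\end{equation}
and since each factor $(-1)^{x\cdot S}-(-1)^{y\cdot S}$ lies in $\{-2,0,2\}$, the coefficient vector $(k_S)$ defined through $\omega_j=\sum_S 2\tilde\lambda_S^{(j)}k_S$ lies in $\{-1,0,1\}^{r_j}$. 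Identifying $\{-1,0,1\}^{r_j}$ with $\mathbb{Z}_3^{r_j}$ and taking the product over $j$ yields the stated containment, with the generalised frequency assignment $\mathbf{k}\mapsto\boldsymbol\omega_{\mathbf{k}}$ reconstructed from the $\tilde\lambda_S^{(j)}$.

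I expect the main obstacle to be this third step: one must verify that the nonzero Walsh-Hadamard modes of $H_j$ genuinely furnish the natural coordinates of the $j$-th block of the lattice, so that the per-coordinate cap $\{-1,0,1\}$ is the correct statement (only a containment is asserted, which is what keeps this tractable), and in the multi-qubit case one must check that commuting Pauli-$Z$ terms inside a single $H_j$ are handled uniformly by the global $2^N$-dimensional transform of the footnote. A secondary technical point is the bookkeeping of equal frequencies arising from distinct tuples $(\omega_1,\dots,\omega_M)$ when defining $\hat c_{\mathbf{k}}$, together with checking that the resulting index set is finite (it is, being a finite lattice). Everything else reduces to the routine superoperator-eigenvalue computation above.
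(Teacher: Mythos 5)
The paper does not actually prove this theorem: it is restated from the cited reference \cite{fontana2022spectral}, and Appendix B of the present paper merely reviews it, so there is no in-paper proof to compare against. Your argument is correct and is essentially the approach of that reference (and the one this paper itself invokes when it decomposes $\mathcal{U}^\dagger(\boldsymbol\theta)$ into ``process modes'' in the proof of Theorem 1): diagonalising each adjoint generator $i\,\mathrm{ad}_{H_j}$ into spectral projectors with frequencies given by eigenvalue differences yields the finite trigonometric-polynomial form and the degree bound $2\|H_j\|_\infty$, reality of $\mathrm{Tr}[O\rho]$ plus linear independence of the exponentials gives $\hat c_{\mathbf{k}}^{\,*}=\hat c_{-\mathbf{k}}$, and the Walsh--Hadamard expansion of the eigenvalue vector, with each nonzero mode contributing a factor $(-1)^{x\cdot S}-(-1)^{y\cdot S}\in\{-2,0,2\}$, gives the containment of the index lattice in $\{-1,0,1\}^{r_1}\times\cdots\times\{-1,0,1\}^{r_M}$ exactly as claimed.
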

Note the shorthand $\mathbb{Z}_{r} := \{-r, -r+1, ..., r-1, r\}$.
The theorem is similar in spirit to results proven in \cite{schuld2019evaluating, vidal2018calculus, schuld2021effect}, however takes a different approach and is in effect more general as it applies to gates generated by any Hermitian $H_j$. This however comes at the expense of applicability. To make the findings directly relatable to typical quantum computing scenarios, we restrict it to circuits where the one-parameter unitaries are independent, canonically parameterised and generated by Pauli operators: $U_i(\theta) = e^{-i P_i \theta/2}$, where $P_i$ is in the Pauli group and $\theta \in (-\pi, \pi]$.
In this setting, the Theorem takes a simpler form:
\begin{corollary}[Theorem~\ref{thm:1} revisited]
	Consider a parameterised quantum state $\rho(\boldsymbol{\theta})$ generated by the action on a state $\rho_0$ of a variational circuit as described above, with $M$ independent parameters. Then $\<O\>_{\rho(\boldsymbol{\theta})}$ is a generalised $M$-variate trigonometric polynomial of bounded degree. The frequencies are supported on the lattice $\boldsymbol \Lambda = \mathbb{Z}_3^{M}$.
\end{corollary}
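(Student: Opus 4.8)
The plan is to obtain this corollary as a direct specialisation of Theorem~\ref{thm:1} to canonically parameterised, Pauli-generated rotations, the only real work being to evaluate the two quantities that Theorem~\ref{thm:1} feeds into its conclusion: the operator norm $\|H_j\|_\infty$, which bounds the per-coordinate degree, and the Walsh--Hadamard rank $r_j$ of the eigenvalue vector of $H_j$, which fixes the ambient lattice factor $\mathbb{Z}_3^{r_j}$. First I would match conventions: writing the canonically parameterised gate $e^{-iP_j\theta_j/2}$ in the form $e^{iH_j\theta_j}$ of Theorem~\ref{thm:1} gives $H_j=-P_j/2$, and the overall sign merely relabels $\mathbf{k}\mapsto-\mathbf{k}$, which is harmless because the lattice is symmetric. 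If some $P_j=\iden$ the corresponding factor is a global phase and may be deleted without changing $C(\boldsymbol\theta)$, so assume every $P_j$ is a non-identity element of the $N$-qubit Pauli group; then $P_j$ has spectrum $\{+1,-1\}$, so $\|H_j\|_\infty=1/2$. Theorem~\ref{thm:1} then gives the degree bound $\sup_{\mathbf{k}}|(\omega_{\mathbf{k}})_j|\le 2\|H_j\|_\infty=1$, and since the $\theta_j$-frequencies of a conjugation by $e^{-iP_j\theta_j/2}$ are differences of eigenphases, they lie in $\tfrac12\big(\{+1,-1\}-\{+1,-1\}\big)=\{-1,0,1\}$; in particular each coordinate frequency is an integer of modulus at most $1$.

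The only non-bookkeeping step is to show $r_j=1$ for every non-identity Pauli generator. A non-identity Pauli $P_j$ has two eigenspaces, each of dimension $2^{N-1}$; choosing an eigenbasis compatible with a tensor factorisation (so that $P_j$ acts as $Z$ on one factor and trivially on the rest), its diagonal is exactly that of $Z_1:=Z\otimes\iden^{\otimes(N-1)}$, up to the inessential scalar $-1/2$ coming from $H_j=-P_j/2$. It therefore suffices to compute the Walsh--Hadamard transform of the eigenvalue vector of $Z_1$, namely the function $x\mapsto(-1)^{x_1}$ on $\{0,1\}^N$; this transform is $\tilde\lambda_y\propto\delta_{y,e_1}$, supported on the single index $e_1$, so its rank is $1$ and the corresponding lattice factor is $\mathbb{Z}_3^{r_j}=\mathbb{Z}_3$. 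Taking the Cartesian product over $j=1,\dots,M$ yields $\boldsymbol\Lambda\subseteq\mathbb{Z}_3^{r_1}\times\cdots\times\mathbb{Z}_3^{r_M}=\mathbb{Z}_3^{M}$, and combining this with the degree bound, $\<O\>_{\rho(\boldsymbol\theta)}=\sum_{\mathbf{k}\in\boldsymbol\Lambda}\hat{c}_{\mathbf{k}}(O)\,e^{i\boldsymbol\omega_{\mathbf{k}}\cdot\boldsymbol\theta}$ is a generalised $M$-variate trigonometric polynomial supported on the lattice $\mathbb{Z}_3^{M}$, as claimed.

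The step I expect to be the main obstacle is precisely the rank computation: it forces one to commit to the exact definition of the ``eigenvalue vector'' and its Walsh--Hadamard transform used in Theorem~\ref{thm:1}, and to check that the reduction of an arbitrary Pauli to $Z_1$ preserves the \emph{rank} of that transform rather than merely the multiset of eigenvalues. If that definition turns out to be sensitive to the ordering of the eigenbasis, I would instead invoke the Clifford/permutation covariance of the construction in \cite{fontana2022spectral}: conjugating $H_j$ by a Clifford permutes computational-basis labels and inserts fixed signs, and the support size of the Walsh--Hadamard transform of the diagonal is invariant under such operations, so the rank equals that of $Z_1$, namely $1$. Everything else --- the trigonometric-polynomial form, the reality constraint $\hat{c}_{\mathbf{k}}^{*}=\hat{c}_{-\mathbf{k}}$, and the degree bound --- is inherited verbatim from Theorem~\ref{thm:1}.
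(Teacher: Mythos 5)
Your derivation is correct and follows the route the paper implicitly intends: the corollary is stated without proof as a direct specialisation of Theorem~\ref{thm:1}, and the only substantive inputs are exactly the two you compute, namely $\|H_j\|_\infty = 1/2$ (giving per-coordinate degree at most $1$) and Walsh--Hadamard rank $r_j=1$ for a non-identity Pauli generator, whose diagonal in a Clifford-diagonalising basis is a single Walsh character $x\mapsto(-1)^{x\cdot v}$. The only point to tighten is in your fallback paragraph: support size of the Walsh--Hadamard transform is preserved under the \emph{affine} relabellings of $\{0,1\}^N$ induced by Clifford conjugation, not under arbitrary permutations of basis labels, but since your primary argument never needs the fallback this does not affect the proof.
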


Up to now we have assumed independently parameterised rotation gates. The result can however be extended in a straight-forward way to any variational ans\"atze of interest currently include different parameterised gates that are controlled by the same parameter:
\begin{corollary}[Correlated gates]
\label{rem:1}
If any one set of $M_c$ rotation gates as described above are correlated by setting them to the same parameter, the frequency support of $\<O\>_{\rho(\boldsymbol{\theta})}$ in terms of the reduced set of independent parameters will become
	\begin{equation}
		\boldsymbol \Lambda = \mathbb{Z}_3^{M - M_c} \times \mathbb{Z}_{2M_c + 1}.
	\end{equation}
\end{corollary}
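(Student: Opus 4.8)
The plan is to obtain the correlated-gate support by a direct change of variables inside the trigonometric polynomial furnished by the corollary \emph{Theorem~\ref{thm:1} revisited}. First I would relabel the rotation gates so that the $M_c$ correlated ones carry indices $M-M_c+1,\dots,M$, and write the uncorrelated cost function as $\langle O\rangle_{\rho(\boldsymbol{\theta})}=\sum_{\mathbf{k}\in\mathbb{Z}_3^{M}}\hat{c}_{\mathbf{k}}(O)\,e^{i\boldsymbol{\omega}_{\mathbf{k}}\cdot\boldsymbol{\theta}}$, where by that corollary every component $(\omega_{\mathbf{k}})_j$ lies in $\{-1,0,1\}$. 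Imposing the correlation amounts to evaluating this expression on the affine subspace $\theta_{M-M_c+1}=\dots=\theta_M=\phi$, leaving the reduced independent-parameter vector $(\theta_1,\dots,\theta_{M-M_c},\phi)\in[0,2\pi]^{M-M_c+1}$.

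Next I would perform the substitution in each exponent. Splitting $\boldsymbol{\omega}_{\mathbf{k}}\cdot\boldsymbol{\theta}=\sum_{j=1}^{M-M_c}(\omega_{\mathbf{k}})_j\theta_j+\big(\sum_{j=M-M_c+1}^{M}(\omega_{\mathbf{k}})_j\big)\phi$ exhibits the effective frequency conjugate to $\phi$ as $\tilde{\omega}_{\mathbf{k}}:=\sum_{j=M-M_c+1}^{M}(\omega_{\mathbf{k}})_j$, a sum of $M_c$ terms each drawn from $\{-1,0,1\}$ and hence an integer in $\{-M_c,\dots,M_c\}$. Grouping the summands according to the profile $\big((\omega_{\mathbf{k}})_1,\dots,(\omega_{\mathbf{k}})_{M-M_c},\tilde{\omega}_{\mathbf{k}}\big)$ — distinct $\mathbf{k}$ sharing such a profile simply have their coefficients added, which preserves the Hermiticity relation $\hat{c}^{*}=\hat{c}_{-\cdot}$ — rewrites the cost function as a finite sum of complex exponentials in the reduced variables indexed by $\{-1,0,1\}^{M-M_c}\times\{-M_c,\dots,M_c\}=\mathbb{Z}_3^{M-M_c}\times\mathbb{Z}_{2M_c+1}$. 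This simultaneously shows it is still a generalised trigonometric polynomial of bounded degree and yields the claimed support.

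The point that needs care is that this argument produces $\mathbb{Z}_3^{M-M_c}\times\mathbb{Z}_{2M_c+1}$ as an a priori (containing) lattice rather than the exact realised support of a particular circuit — exactly as Theorem~\ref{thm:1} and its corollary quote the maximal lattice rather than the circuit-specific one. I would therefore state the conclusion at that level, and optionally note tightness: for each target value $t\in\{-M_c,\dots,M_c\}$ one can pick a sign pattern on the $M_c$ correlated generators summing to $t$, so no frequency in the box is forced to vanish identically. The main (and only mild) obstacle is bookkeeping — keeping the reordering, the coefficient merging, and the reality constraint mutually consistent through the substitution — rather than anything structural; this corollary is essentially a relabeling of the uncorrelated result.
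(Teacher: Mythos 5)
Your argument is correct and is exactly the substitution-and-regrouping reasoning the paper leaves implicit (the corollary is stated without an explicit proof, and your derivation --- each correlated gate contributing a frequency in $\{-1,0,1\}$, so the effective frequency conjugate to the shared superparameter is their sum in $\{-M_c,\dots,M_c\}$, i.e.\ a set of $2M_c+1$ values --- is precisely what the authors rely on, as confirmed by the $\prod_i(2M_i+1)$ count in the proof of Theorem~\ref{thm:3}). Your cautionary remark that the result gives a containing lattice rather than the exact realised support, and your check that coefficient merging preserves $\hat{c}_{\mathbf{k}}^{*}=\hat{c}_{-\mathbf{k}}$, are both consistent with how the paper uses the statement.
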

In such situations, we will refer to the independent parameters as \textit{superparameters} to distinguish them from the single-gate parameters.
The two corollaries in various forms have been reported in \cite{vidal2018calculus, nakanishi2020sequential, parrish2019jacobi, ostaszewski2021structure, schuld2021effect, koczor2022quantum}.

Overall, the take-home messages of these results are: 1) VQAs have a finite, discrete frequency support, and 2) the maximum such theoretical support be derived from the form of the parameterised unitary in a straight-forward manner. 

\section{Proof of Theorem \ref{thm:2}}
\label{app:thm:2}
\begin{customthm}{1}
	Consider a VQA with an ansatz given by an $N$ qubit circuit that consists of $M$ parametrised unitaries generated by Pauli operators $P_i$ and where all $S$ unparametrised gates are Clifford:
	\begin{equation}
		U(\boldsymbol \theta) = C_M e^{-i P_M \theta_M/2} C_{M-1} \cdots C_1 e^{-i P_1 \theta_1/2} C_{0}.
	\end{equation} We refer to $U(\boldsymbol{\theta})$ of this form as a \emph{Clifford variational circuit}. Let $|\psi_0\>$ be the input state with stabiliser weight $\omega_\rho$ and the measured observable $O$ decompose into $\omega_O$  Pauli operators. 
	Then if $M \in O(\log N)$ and $w_{\rho}, w_O, S \in O(\text{\normalfont{poly}}(N))$, the closed form of the cost function $C(\boldsymbol \theta)$ can be determined by a polynomial time in $N$ classical algorithm. 
\end{customthm}
\begin{proof}
Let us write
\begin{equation}
	|\psi_0\> = \sum_i^{w_{\rho}} s_i |\psi_i\>, \;\;\; O = \sum_i^{w_O} r_i P_i ,
\end{equation}
where $|\psi_i\>$ are stabilizer states and $P_{i}$ are $N$-qubit Pauli operators.
By linearity, the cost function can be decomposed into a sum over stabilisers and Pauli operators:
\begin{equation}
	C(\boldsymbol \theta) = \sum_{ijl} s_i s^*_j r_l \; \Tr\left[U^\dagger(\boldsymbol \theta) P_l U(\boldsymbol \theta) |\psi_i\>\<\psi_j|\right]
\end{equation}
Using the results in \cite{fontana2022spectral}, we can decompose the superoperator $\mathcal{U}^\dagger (\boldsymbol \theta) := U^\dagger (\boldsymbol \theta) \cdot U(\boldsymbol \theta)$ into Clifford process modes. Therefore, working in the Heisenberg picture, we can write $U^\dagger(\boldsymbol \theta) P_l U(\boldsymbol \theta) = \sum_{k} \phi_{k} (\boldsymbol \theta) P_{lk}$, where $P_{lk}$ is another Pauli operator and $\phi_{k} (\boldsymbol \theta)$ is a trigonometric monomial. This sum has at most $3^M$ terms.\\
Combining with the above one gets:
\begin{equation}
	C(\boldsymbol \theta) = \sum_{ijkl} s_i s^*_j r_l \phi_{k} (\boldsymbol \theta) \; \Tr\left[P_{lk} |\psi_i\>\<\psi_j|\right]
\end{equation}
The sum has at most $3^M w^2_\rho w_O$ terms. Each Pauli operator $P_{lk}$ can be determined in linear time in $S$ by applying successive Clifford channels to $P_l$, each step scaling linearly with $N$. The expectation values can also be determined in time linear in $N$ by virtue of the states being stabiliser. Therefore, the entire sum can be determined in polynomial time in $N$ as long as $M \in O(\log N)$, $w_{\rho}, w_O, S \in O(\text{\normalfont{poly}}(N))$.
\end{proof}

\section{Proof of Theorem \ref{thm:3}}
\label{app:thm:3}
\begin{customthm}{3}[Correlated parameter cost functions are efficiently recoverable]
Consider a VQA where the ansatz is composed of $M \in O(\text{\normalfont{poly}}(N))$ parameterised single-qubit rotation gates and any number of unparameterised gates. Let $d$ be the number of independent parameters with the $M$ rotation angles partitioned into $d$ sets, each containing multiples of the same parameter.  Then the algorithm's cost function is efficiently recoverable via DFT if $d$ does not vary with $N$. 
\end{customthm}
\begin{proof}
Assume, without loss of generality, that single-qubit rotation gates are canonically parameterised, and that the parameter within each set are correlated by setting them equal to the same superparameter. Denote the number of parameters within set $i$ by $M_i$, such that $\sum_{i=1}^d M_i = M$. By Corollary~\ref{rem:1}, the maximum size of the Fourier support is therefore
\begin{equation}
	|\Lambda| = \prod_{i=1}^d (2M_i + 1) \le \left(2\frac{M}{d} + 1\right)^d
\end{equation}
Since $M \in O(\text{\normalfont poly}(N))$ and $d$ is a constant, $|\Lambda| \in O(\text{\normalfont poly}(N))$. Therefore, the entire cost function can be recovered by sampling from a polynomial sized grid and performing a DFT.
\end{proof}

\section{Review of Basis Pursuit}
\label{app:bp}
Basis Pursuit is the following algorithm:
\begin{definition}[Basis Pursuit \cite{chen2001atomic}]
	\normalfont
	Consider a function $f(x) \in \mathbb{R}$ sampled over a set of points $\Omega = \{x_i\} \subset \mathbb{R}$. Basis Pursuit attempts to reconstruct the function by a trial function $g(x)$ which is the solution of the following convex linear program:
	\begin{equation}
		\label{eq:linprog}
		\text{minimize } \|\hat{g}\|_1 := \sum_k |\hat{g}_k| \text{ s.t. } g(x_i) = f(x_i) \; \forall \; x_i \in \Omega
	\end{equation}
	where $\hat{g}_k$ are the coefficients of $g(x)$ when expanded in some basis of functions.
\end{definition}
We are mostly concerned about the case when $f$ is a periodic function of bounded frequency (a trigonometric polynomial) and the basis is the discrete Fourier basis, which therefore includes finitely many terms $n$ (the non-periodic case is considered in Appendix~\ref{app:nonuniform}). However, we usually have for the sampling set $|\Omega| = m \ll n$, i.e. the problem is underdetermined and may admit multiple solutions. The power of Basis Pursuit emerges in the case the signal is sparse in the Fourier basis.
The effectiveness in this case has been rigorously shown in the theorem of Rauhut \cite{rauhut2007random}, which is based on the results by Candes \textit{et al.} \cite{candes2006robust}. The following formulation is that of \cite{kunis2008random}.
\begin{theorem}[Rauhut \cite{rauhut2007random}]
	\label{thm:rauhut}
	Consider a trigonometric polynomial that has $s$ nonzero Fourier coefficients out of $n$ possible ones, and is otherwise unknown. Then for some $\epsilon > 0$ the polynomial can be reconstructed from $m$ random samples with probability $\ge 1 - \epsilon$ via Basis Pursuit, if
	\begin{equation}
		m \ge Cs\log(n/\epsilon)
		\label{eq:sampling_req}
	\end{equation}
	where C is an absolute constant.
\end{theorem}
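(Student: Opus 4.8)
The plan is to reduce the statement to the existence of an \emph{exact dual certificate} for the $\ell_1$ program and then to construct such a certificate using concentration of measure for sums of independent random rank-one matrices. Write $f(x)=\sum_{k}\hat f_k e^{ikx}$ with support $T=\operatorname{supp}(\hat f)$, $|T|=s$, sample at $x_1,\dots,x_m$ i.i.d.\ uniform on $[0,2\pi)$, and collect the normalised rows $\tfrac{1}{\sqrt m}(e^{ikx_j})_k$ into $\Phi\in\mathbb{C}^{m\times n}$, with $\Phi_T$ the column submatrix on $T$. The structural fact I would invoke is the classical sufficient condition for $\hat f$ to be the unique minimiser of $\|\hat g\|_1$ subject to $\Phi\hat g=\Phi\hat f$: it suffices that (i) $\Phi_T$ is injective, and (ii) there exists $\eta\in\mathbb{C}^m$ with $(\Phi^*\eta)_k=\operatorname{sgn}(\hat f_k)$ for $k\in T$ and $|(\Phi^*\eta)_k|<1$ for $k\notin T$.

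First I would control the conditioning of $\Phi_T$. Since the rows are independent and each entry $e^{ikx_j}$ has modulus one, $\Phi_T^*\Phi_T$ is an average of $m$ i.i.d.\ rank-one matrices, each of operator norm $s/m$, with expectation $I_T$ (using $\mathbb{E}\,e^{i(k-k')x}=\delta_{k,k'}$ for distinct frequencies in the fundamental range). A matrix Bernstein / Rudelson-type tail bound then gives $\|\Phi_T^*\Phi_T-I_T\|_{\mathrm{op}}\le\tfrac12$ with probability at least $1-\epsilon/2$ once $m$ exceeds a constant multiple of $s\log(s/\epsilon)$, which already establishes (i) together with $\|(\Phi_T^*\Phi_T)^{-1}\|_{\mathrm{op}}\le 2$.

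Next I would take the canonical candidate $\eta=\Phi_T(\Phi_T^*\Phi_T)^{-1}\operatorname{sgn}(\hat f_T)$, so that $\Phi^*\eta$ equals $\operatorname{sgn}(\hat f_T)$ on $T$ automatically and the entire task collapses to proving $|(\Phi^*\eta)_k|<1$ for each $k\notin T$. For a fixed off-support index $k$ this is $\langle\phi_k,\Phi_T(\Phi_T^*\Phi_T)^{-1}\operatorname{sgn}(\hat f_T)\rangle$, a sum of $m$ mean-zero, bounded contributions; a Hoeffding-type inequality together with a union bound over the at most $n$ indices $k\notin T$ yields $\max_{k\notin T}|(\Phi^*\eta)_k|<1$ with probability at least $1-\epsilon/2$ provided $m$ is at least a constant times $s\log(n/\epsilon)$ --- which is precisely where the $\log(n/\epsilon)$ factor in \eqref{eq:sampling_req} appears. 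A final union bound over the two events and absorption of constants give the claim for $m\ge Cs\log(n/\epsilon)$.

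The hard part is this last step, because $(\Phi^*\eta)_k$ depends on the same random nodes $x_j$ through the inverse $(\Phi_T^*\Phi_T)^{-1}$, so the summands are not genuinely independent and one cannot condition naively. The two standard ways around this are: (a) Rauhut's route of expanding $(\Phi_T^*\Phi_T)^{-1}=\sum_{\ell\ge 0}(I_T-\Phi_T^*\Phi_T)^{\ell}$ as a Neumann series, convergent on the good event of the previous paragraph, and bounding each order through a moment estimate; or (b) the golfing scheme, splitting the $m$ samples into $O(\log n)$ independent batches and building the certificate iteratively so the randomness used at each stage is fresh. One also has to check that the real-valuedness of $f$ (equivalently $\hat f_{-k}=\overline{\hat f_k}$) does not destroy row independence, which is handled simply by carrying complex coefficients throughout; the remaining work is routine concentration bookkeeping.
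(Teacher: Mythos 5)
This statement is not proved in the paper at all: it is an external result quoted verbatim (in the formulation of Kunis--Rauhut) from \cite{rauhut2007random}, which in turn builds on \cite{candes2006robust}. So there is no in-paper argument to compare against, and the only question is whether your sketch would stand on its own. The architecture you describe is the correct and standard one for nonuniform sparse recovery, and it is essentially the route Rauhut actually takes: reduce exact $\ell_1$ recovery to injectivity of $\Phi_T$ plus an exact dual certificate, control $\|\Phi_T^*\Phi_T - I_T\|_{\mathrm{op}}$ by concentration of sums of independent rank-one matrices (your norm bound $s/m$ per summand and the resulting $m \gtrsim s\log(s/\epsilon)$ threshold are right), and then take the canonical certificate $\eta=\Phi_T(\Phi_T^*\Phi_T)^{-1}\operatorname{sgn}(\hat f_T)$.

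The genuine gap is that the step you yourself flag as ``the hard part'' is the theorem. As written, the Hoeffding-plus-union-bound argument for $\max_{k\notin T}|(\Phi^*\eta)_k|<1$ is not valid, because the summands $e^{-ikx_j}\,u_j^*(\Phi_T^*\Phi_T)^{-1}\operatorname{sgn}(\hat f_T)$ are coupled through the inverse Gram matrix, so they are neither independent nor individually mean zero conditional on anything useful; you cannot ``condition naively,'' as you note. Naming the two known repairs (Neumann-series expansion with combinatorial moment estimates, or the golfing scheme with fresh sample batches) does not discharge this: in Rauhut's proof the moment estimates for the powers $(I_T-\Phi_T^*\Phi_T)^{\ell}$ tested against an off-support column constitute the bulk of the technical work, and the golfing scheme requires a nontrivial redefinition of the certificate (it is no longer the canonical one) together with a careful allocation of the $m$ samples across $O(\log n)$ rounds. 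A secondary point worth checking if you complete the argument: the theorem as stated holds for an arbitrary fixed coefficient vector, whereas several textbook versions of this analysis assume a random (Steinhaus or Rademacher) sign pattern on $T$ to get the clean $m\ge Cs\log(n/\epsilon)$ bound; make sure the variant you invoke does not smuggle in that assumption.
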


Basis Pursuit has been extended to noisy signals with a known amount of random noise. Remarkably, recovery of the noiseless signal is possible, hence the name \textit{Basis Pursuit Denoising}.
The following formulation comes from \cite{donoho2005stable}:
\begin{definition}[Basis Pursuit Denoising (BPDN) \cite{chen2001atomic}]
	\normalfont
	Consider a noisy signal $f(x) \in \mathbb{R}$ sampled over a set $\Omega = \{x_i\} \subset \mathbb{R}$. The signal corresponds to a noiseless signal $f_0(x)$ corrupted by random Gaussian noise $e(x)$: $f(x) = f_0(x) + e(x)$. Basis Pursuit Denoising attempts to reconstruct the entire noiseless signal by solving the following convex linear program:
	\begin{equation}
		\text{minimize } \|\hat{g}\|_1 \text{ s.t. } \frac{1}{|\Omega|} \sum_{x_i \in \Omega} (g(x_i) - f(x_i))^2 \le \epsilon
	\end{equation}
	where $\epsilon := \frac{1}{|\Omega|} \sum_{x \in \Omega} e^2(x) $ defines the tolerance to noise.
\end{definition}
Note that BPDN is equivalent to Basis Pursuit whenever $\epsilon = 0$.
There exist equivalent results on the effectiveness of Basis Pursuit Denoising for the case of sparse signals when sufficient samples are taken, see e.g. \cite{candes2006stable}.

For trigonometric polynomials, BPDN can be formulated in vector form as (from \cite{chen2001atomic}):
\begin{equation}
		\text{minimize } \frac{1}{2} \|\Phi\boldsymbol{\hat g} - \boldsymbol{f}\|^2_2 + \lambda \|\boldsymbol{\hat g}\|_1
\end{equation}
where $\boldsymbol{\hat g}$ is the vector of all $n$ Fourier coefficients, which is varied until convergence. $\Phi$ is the matrix obtained by selecting $m$ rows of the $n\times n$ DFT matrix corresponding to the sampling points $\{x_1, \cdots, x_m\}$, and $\boldsymbol{f}$ is the vector of observations of the function $f$: $\boldsymbol{f} := [f(x_1), \cdots, f(x_m)]$. $\lambda$ is a tunable parameter that enforces sparsity in the solution: a larger value will lead to a sparser solution, which however may not be the optimal one.
This is also known as a LASSO problem \cite{tibshirani1996regression} and allows for the use of iterative solvers.

So far we have assumed 1D signals, however the results extend trivially to any dimensionalty $D$ by substituting $n \rightarrow n^D$. This is because there exists an isomorphism between a $n$-frequency DFT in $D$ dimensions and a $n^D$-frequency DFT in 1 dimensions. With this substitution, from Theorem \ref{thm:rauhut} the number of samples $m$ necessary to have a high probability of success would scale linearly with the dimensionality $D$, assuming $s$ does not vary with $D$.

\section{Reconstruction algorithm}
\label{app:algo}
Our specific implementation is an gradient-descent-based solver called the Fast Iterative Shrinkage-Thresholding Algorithm (FISTA)~\cite{beck2009fast}, specific to LASSO optimisation problems. 
If successful, the output of the FISTA algorithm returns the Fourier support alongside the corresponding coefficients. However, due to the soft thresholding step the latter might not be the optimal ones. Therefore, we have found it beneficial to follow the algorithm with an additional L2-only minimisation step, which refines the coefficients on the identified support only. In our case we achieve this with a conjugate gradient algorithm~\cite{fletcher1964function}, but simple gradient descent also succeeds.\\
The complete algorithm for reconstruction is outlined as \textsc{Recover} in Algorithm~\ref{alg:1}. 
An example of the effectiveness of the algorithm for reconstructing a QAOA landscape can be seen in Figure \ref{fig:mse_28}. The extra conjugate gradient step can be seen to visibly improve the accuracy.

\begin{figure*}
\centering
\includegraphics[width=0.9\textwidth]{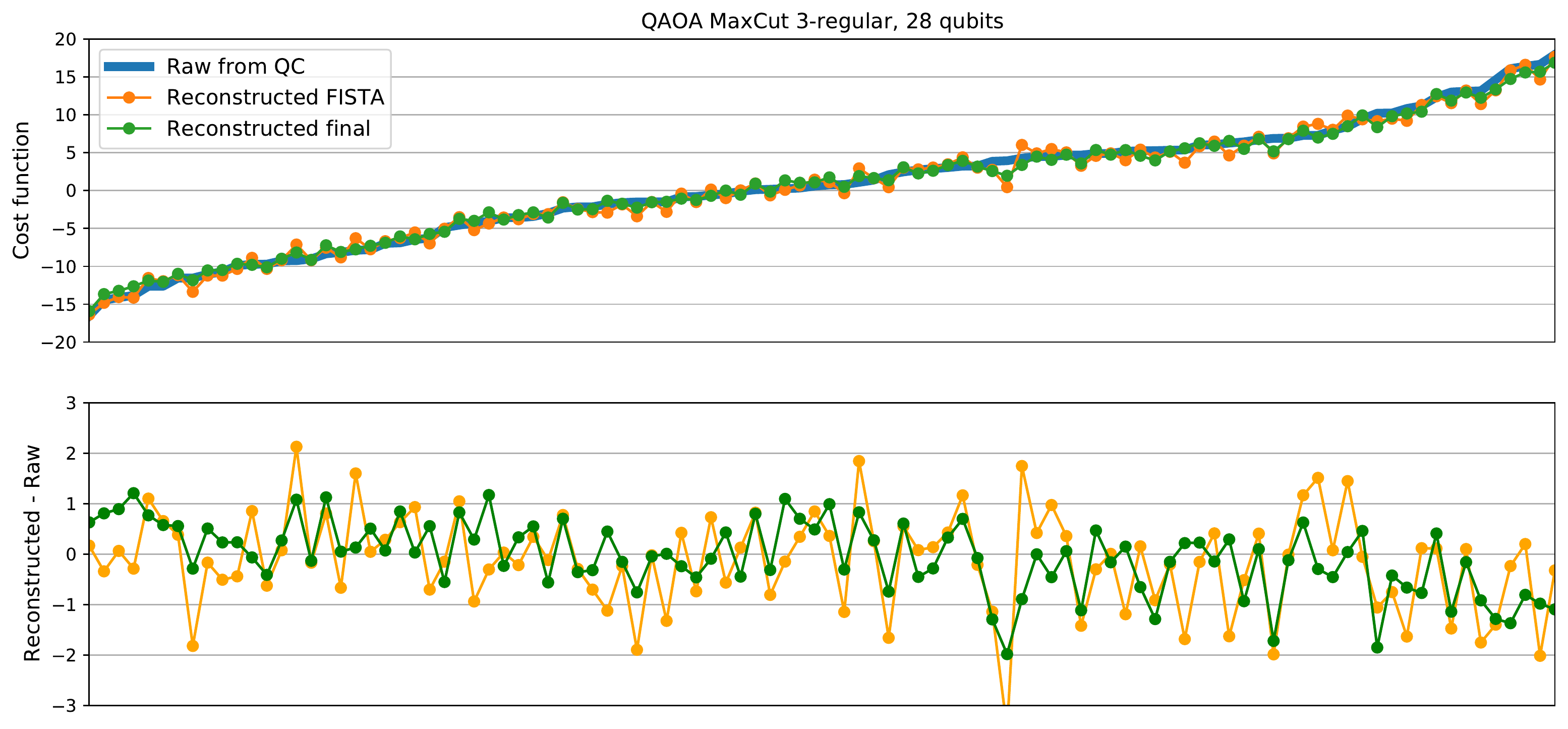}
\caption{Out-of-sample accuracy of reconstruction for a $N=28$, $p=2$ QAOA, performed using $m=4000$ random samples of the landscape.
Top: cost function value of 100 random points, for the true landscape (blue), the reconstructed landscape with FISTA only (orange), the reconstructed landscape after FISTA and gradient descent (green).
Bottom: difference in cost function values between true and reconstructed landscapes.}
\label{fig:mse_28}
\end{figure*}

\begin{algorithm}[H]
\caption{Method for recovering Fourier coefficients of VQA}
\label{alg:1}
\begin{algorithmic}[1]
\Procedure{FISTA}{$\boldsymbol \theta, \boldsymbol C, \alpha, \lambda, n$}
	\State $\boldsymbol{\hat c} \gets \boldsymbol 0$
	\State $\Delta \boldsymbol{\hat c} \gets \boldsymbol 0$

	\For{$i \gets 1$ \textbf{to} $n$}
		\State $\boldsymbol{g} \gets $ \textsc{Grad}($\boldsymbol \theta, \boldsymbol C, \boldsymbol{\hat c}$) \Comment{Gradient of L2-only norm cost}
		\State $\boldsymbol{\hat c'} \gets \boldsymbol{\hat c} - \alpha \boldsymbol{g} + \frac{i - 2}{i + 1} \Delta \boldsymbol{\hat c}$ \Comment{Accelerated GD step}
		\State $\boldsymbol{\hat c''} \gets \text{sign}(\boldsymbol{\hat c'}) \max(|\boldsymbol{\hat c'}| - \alpha\lambda\boldsymbol 1, \boldsymbol 0)$  \Comment{Soft thresholding}
		\State $\Delta \boldsymbol{\hat c} \gets \boldsymbol{\hat c''} - \boldsymbol{\hat c}$
		\State $\boldsymbol{\hat c} \gets \boldsymbol{\hat c''}$
	\EndFor

	\State \textbf{return} $\boldsymbol{\hat c}$ \Comment{Converged Fourier coeffs}
\EndProcedure

\Procedure{Recover}{$m, \alpha_{FISTA}, \lambda, n_{FISTA}, \alpha_{GD}, n_{GD}$}
		\State $\boldsymbol \theta \gets$ \textsc{RandomSampleGrid}(m) \Comment{Get $m$ random points on grid in parameter space}
		\State $\boldsymbol C \gets$ \textsc{QCEval}($\boldsymbol \theta$) \Comment{Query QC to generate samples}
		\State $\boldsymbol{\hat c} \gets$ \textsc{FISTA}($\boldsymbol \theta, \boldsymbol C, \alpha_{FISTA}, \lambda, n_{FISTA}$) \Comment{Run FISTA to get sparse coefficient vector}
		\State $\boldsymbol{\hat c'} \gets$ \textsc{GD}($\boldsymbol{\hat c}, \boldsymbol \theta, \boldsymbol C, \alpha_{GD}, n_{GD}$)  \Comment{Gradient descent over sparse support}
		\State \textbf{return} $\boldsymbol{\hat c'}$ \Comment{Converged Fourier coeffs}
\EndProcedure

\end{algorithmic}
\end{algorithm}

\subsubsection{Hyperparameter tuning}
Throughout our experiments we set $\alpha_{FISTA} = 0.1, n_{FISTA} = 3000, \alpha_{GD} = 1, n_{GD} = 40$.
The parameter $\lambda$ determines the threshold for the Fourier coefficients, and thus must be determined based on their expected magnitude. Heuristically we find that a good choice is $\lambda \sim \frac{1}{m}\|\boldsymbol C\|^2_2$.
The number of samples $m$ should be based on the intended accuracy of the algorithm as well as the cost of sampling, either on a simulator or on a quantum computer, which may vary greatly based on the specific setting. Prior knowledge on the expected sparsity may also be taken into account when choosing sampling size.
In our implementation, we determine $m$ dynamically, by setting aside a small (size 100) set of datapoints and comparing the out-of-sample (OOS) error between the zero vector and the reconstructed cost function after the algorithm has completed. If the error is greater than a threshold, $m$ is increased accordingly and the algorithm is re-run. Notice that this strategy can be made sample-efficient by reusing previous samples from the quantum computer in the new iteration.

\section{Narrow gorges are unrecoverable}
\label{app:finite}

\begin{definition}[Narrow gorge \cite{cerezo2021cost}]
	\label{def:narrow_gorge}
	\normalfont
	Consider a family of VQAs indexed by their system size $N$. The set of cost functions $\{C^{(N)}(\boldsymbol{\theta})\}_N$ is said to present a \textit{narrow gorge} if the following two conditions are met:
	\begin{itemize}
	 	\item For any $N$, there exists a $\boldsymbol{\theta}^*$ such that
			\begin{equation}
				\label{eq:gorge}
				|C^{(N)}(\boldsymbol{\theta}^*)| \in \Omega(1/{\normalfont \text{poly}}(N)) .
			\end{equation}
		\item For any $N$, there exists a constant $b > 1$ such that
			\begin{equation}
				\label{eq:var}
				\underset{\boldsymbol{\theta}}{\sigma}[C^{(N)}(\boldsymbol{\theta})] \in O(b^{-N}) .
			\end{equation}
	 \end{itemize}
\end{definition}

\subsection{Proof of noiseless result}

We first need the following two lemmas:
\begin{lemma}
	\label{lem:1norm}
	With probability $\ge 1 - \frac{1}{t^2}$, the 1-norm of a vector $\boldsymbol{y}$ of $m$ iid random variables obeying $\mathbb{E}(y_i) = 0, \text{Var}(y_i) = \sigma^2$ obeys
	\begin{equation}
		\|\boldsymbol{y}\|_1 \le (m + t\sqrt{m})\sigma.
	\end{equation}
\end{lemma}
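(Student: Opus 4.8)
The plan is to bound the expected value of $\|\boldsymbol{y}\|_1$ and then apply Markov's inequality. First I would observe that by the triangle inequality and linearity of expectation, $\mathbb{E}\|\boldsymbol{y}\|_1 = \sum_{i=1}^m \mathbb{E}|y_i|$. For each term, since $\mathbb{E}(y_i) = 0$, Jensen's inequality applied to the concave function $\sqrt{\cdot}$ gives $\mathbb{E}|y_i| = \mathbb{E}\sqrt{y_i^2} \le \sqrt{\mathbb{E}(y_i^2)} = \sqrt{\operatorname{Var}(y_i)} = \sigma$. Hence $\mathbb{E}\|\boldsymbol{y}\|_1 \le m\sigma$.

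Next I would control the fluctuations of $\|\boldsymbol{y}\|_1$ around its mean via its variance. Writing $Z = \|\boldsymbol{y}\|_1 = \sum_i |y_i|$, the summands $|y_i|$ are independent, so $\operatorname{Var}(Z) = \sum_i \operatorname{Var}(|y_i|) = \sum_i \left(\mathbb{E}(y_i^2) - (\mathbb{E}|y_i|)^2\right) \le \sum_i \mathbb{E}(y_i^2) = m\sigma^2$. Therefore the standard deviation of $Z$ is at most $\sigma\sqrt{m}$. Applying Chebyshev's inequality in the one-sided form (or the standard two-sided form, which only weakens the failure probability bound in our favour), $\Pr\!\left(Z \ge \mathbb{E}(Z) + t\,\sigma\sqrt{m}\right) \le \frac{\operatorname{Var}(Z)}{t^2 \sigma^2 m} \le \frac{1}{t^2}$. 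Combining with $\mathbb{E}(Z) \le m\sigma$ yields $\Pr\!\left(\|\boldsymbol{y}\|_1 \ge (m + t\sqrt{m})\sigma\right) \le \frac{1}{t^2}$, which is the complement of the claimed event.

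There is essentially no hard part here; the only subtlety is making sure the two uses of Jensen's inequality point the right way — $\mathbb{E}|y_i| \le \sigma$ for the mean bound, and $(\mathbb{E}|y_i|)^2 \ge 0$ (trivially) together with $\mathbb{E}(y_i^2) = \sigma^2$ for the variance bound — and being careful that Chebyshev is applied to the centered variable $Z - \mathbb{E}(Z)$ rather than to $Z - m\sigma$ directly, since $m\sigma$ is only an upper bound on the mean. Because $m\sigma \ge \mathbb{E}(Z)$, the event $\{Z \ge (m + t\sqrt{m})\sigma\}$ is contained in $\{Z \ge \mathbb{E}(Z) + t\sigma\sqrt{m}\}$, so the probability bound transfers cleanly. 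I would also note that no distributional assumption beyond finite variance is needed, which is important since in the application $\boldsymbol{y}$ will be a vector of shot-noise errors whose distribution is not Gaussian.
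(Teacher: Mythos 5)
Your proof is correct and follows essentially the same route as the paper's: Chebyshev's inequality applied to $\|\boldsymbol{y}\|_1$, with the mean bounded by $m\sigma$ via $\mathbb{E}|y_i|\le\sigma$ and the variance bounded by $m\sigma^2$. The paper reaches the same endpoint slightly more circuitously, parameterising $\mathbb{E}|y_i|=\sigma(1-\Delta)$ and maximising over $\Delta$, which amounts to exactly the term-by-term bound you use directly.
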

\begin{proof}
Since the $y_i$ are iid we have trivially:
\begin{align}
	\mathbb{E}(\|\boldsymbol y\|_1) &= m \mathbb{E}(|y_i|)\\
	\text{Var}(\|\boldsymbol y\|_1) &= m \text{Var}(|y_i|) = m\mathbb{E}(y_i^2) - m\mathbb{E}^2(|y_i|) = m\sigma^2 - m\mathbb{E}^2(|y_i|)
\end{align}
Now use the inequality
\begin{equation}
	\frac{1 + u - (u - 1)^2}{2} \le \sqrt{u} \le \frac{1 + u}{2}
\end{equation}
Substituting $u = \frac{y_i^2}{\sigma^2}$ such that $\mathbb{E}(u) = 1$ and taking expectation values, we can write
\begin{equation}
	\mathbb{E}(|y_i|) = \sigma(1 - \Delta)
\end{equation}
where $0 \le \Delta \le \frac{1}{2\sigma^4}\text{Var}(y_i^2)$.
With this expression we can now write
\begin{align}
	\mathbb{E}(\|\boldsymbol y\|_1) &= m\sigma(1 - \Delta), \\
	\text{Var}(\|\boldsymbol y\|_1) &= m\sigma^2 - m\sigma^2(1 - \Delta)^2.
\end{align}
Finally we use Chebyshev's inequality to assert that
\begin{equation}
	\textbf{Pr}\left(\left|\|\boldsymbol y\|_1 - \mathbb{E}(\|\boldsymbol y\|_1) \right| \ge t\sqrt{\text{Var}(\|\boldsymbol y\|_1)}\right) \le \frac{1}{t^2}.
\end{equation}
Therefore with probability higher than $1-\frac{1}{t^2}$ the following inequality occurs:
\begin{align}
	\| \boldsymbol{y}\|_1 \le \left|\|\boldsymbol y\|_1 - \mathbb{E}(\|\boldsymbol y\|_1) \right|  + \mathbb{E}(\|\boldsymbol y\|_1 ) \le  t\sqrt{\text{Var}(\|\boldsymbol y\|_1)} +  \mathbb{E}(\|\boldsymbol y\|_1  = m\sigma(1-\Delta) + t\sqrt{m\sigma^2 - m\sigma^2(1-\Delta)^2} ,
\end{align}
where the first bound comes from the triangle inequality. Maximising over all  values of $\Delta$ implies that the RHS is upper bounded by $(m + t\sqrt{m})\sigma$.

\end{proof}

\begin{lemma}
\label{lem:recon_fails}
Take a signal $\boldsymbol{y} \in \mathbb{C}^n$ to be reconstructed from its samples $\boldsymbol{y}|_{\Omega}$ over a set $\Omega$, via Basis Pursuit in Fourier space.
Then whenever
\begin{equation}
	\label{eq:cond}
	\frac{\|\boldsymbol{y}|_{\Omega}\|_1 }{\|\boldsymbol{y}\|_{2}} < \frac{1}{\sqrt{n}},
\end{equation}
there exists a vector $\boldsymbol{z} \neq \boldsymbol{y}$ such that $\boldsymbol{z}|_{\Omega} = \boldsymbol{y}|_{\Omega}$ and $\|\boldsymbol{\hat{z}}\|_1 < \|\boldsymbol{\hat{y}}\|_1$, and therefore the reconstruction fails.
\end{lemma}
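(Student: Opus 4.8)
The plan is to exhibit an explicit competitor $\boldsymbol z$ whose Fourier $\ell^1$-norm undercuts that of $\boldsymbol y$, by pitting a single Fourier atom against $\boldsymbol y$. The natural candidate is the trigonometric polynomial supported on a single frequency that is tuned to agree with $\boldsymbol y$ on the sampling set $\Omega$. Concretely, I would look for a $\boldsymbol z$ of the form $\hat z_{k_0} = a$ (all other Fourier coefficients zero) for a suitably chosen frequency $k_0$ and amplitude $a$, so that $\|\hat{\boldsymbol z}\|_1 = |a|$. The constraint $\boldsymbol z|_\Omega = \boldsymbol y|_\Omega$ will generically overdetermine a single atom, so the honest route is: do not ask $\boldsymbol z$ to match $\boldsymbol y$ exactly on all of $\Omega$, but instead build $\boldsymbol z$ from the samples themselves via the (unnormalised) inverse DFT restricted to $\Omega$.

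**First** I would recall that Basis Pursuit in Fourier space minimises $\|\hat{\boldsymbol z}\|_1$ over all $\boldsymbol z$ with $\boldsymbol z|_\Omega = \boldsymbol y|_\Omega$, so it suffices to produce one feasible $\boldsymbol z$ with strictly smaller Fourier $\ell^1$-norm than $\boldsymbol y$; this automatically certifies that $\boldsymbol y$ is not the BP solution, hence reconstruction fails. Since $\boldsymbol y$ itself is feasible, we know the optimum is $\le \|\hat{\boldsymbol y}\|_1$; I want to show it is strictly smaller, and in fact I only need to compare against \emph{any} one cleverly chosen feasible point. The cheapest such point: take the function $z$ whose samples on $\Omega$ equal $\boldsymbol y|_\Omega$ but which is ``spread out'' in Fourier space as little as possible in $\ell^1$ — and the extreme case of cheapness is to write $z$ as a short sum (ideally a single term) of Fourier exponentials fitted to the $m$ data points. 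With $m$ free coefficients one can always interpolate $m$ values, and the resulting $\|\hat{\boldsymbol z}\|_1$ is controlled by $\|\boldsymbol y|_\Omega\|_1$ up to DFT normalisation factors involving $n$ and $m$.

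**The key inequality chain** I anticipate is roughly: pick $\boldsymbol z$ supported on (at most) $m$ frequencies interpolating the data, estimate $\|\hat{\boldsymbol z}\|_1 \lesssim \sqrt{m}\,\|\hat{\boldsymbol z}\|_2$ and then relate $\|\hat{\boldsymbol z}\|_2$ to $\|\boldsymbol y|_\Omega\|_2 \le \|\boldsymbol y|_\Omega\|_1$ via Parseval-type identities for the partial DFT (picking up a $1/\sqrt n$ or $\sqrt{n}/\sqrt{m}$ from the normalisation of $\Phi$). On the other side, $\|\hat{\boldsymbol y}\|_1 \ge \|\hat{\boldsymbol y}\|_2 = \|\boldsymbol y\|_2/\sqrt{n}$ (again by Parseval with the chosen normalisation). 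Putting these together, the condition $\|\hat{\boldsymbol z}\|_1 < \|\hat{\boldsymbol y}\|_1$ reduces exactly to $\|\boldsymbol y|_\Omega\|_1 / \|\boldsymbol y\|_2 < 1/\sqrt n$ (possibly after absorbing the $m$-dependent factors, which is why I expect the statement to come out clean in $n$ alone — the $\sqrt m$ from the Cauchy--Schwarz step and a $1/\sqrt m$ from the sampling normalisation should cancel). I also need to check the edge case $\boldsymbol y|_\Omega = \boldsymbol 0$: then $\boldsymbol z = \boldsymbol 0$ is feasible with $\|\hat{\boldsymbol z}\|_1 = 0 < \|\hat{\boldsymbol y}\|_1$ since $\boldsymbol y \ne \boldsymbol 0$, so the conclusion holds trivially and strictness is fine.

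**The main obstacle** will be getting the constant exactly right — in particular tracking the DFT normalisation consistently between the $\ell^1$ side and the $\ell^2$ side so that the threshold is precisely $1/\sqrt n$ and not $c/\sqrt n$ for some stray $c$, and making sure the interpolating construction of $\boldsymbol z$ is genuinely feasible (the partial DFT matrix restricted to $m$ suitably many frequencies must be invertible onto $\mathbb C^m$, which holds for generic frequency choices but needs a word of justification, e.g. via a Vandermonde argument). A secondary subtlety is ensuring the inequality is \emph{strict}: if $\boldsymbol y$ happened already to be supported on those $m$ frequencies the bound could be tight, so I would argue that either such degeneracy does not occur under \eqref{eq:cond}, or that one can perturb the frequency selection to restore strictness while keeping feasibility. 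Once those bookkeeping points are settled, the lemma follows immediately by transitivity of the displayed inequalities.
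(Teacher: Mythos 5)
There is a genuine gap. Your competitor $\boldsymbol z$ is built by interpolating the $m$ samples with (at most) $m$ Fourier atoms, and the whole argument hinges on the claim that the resulting coefficients satisfy $\|\hat{\boldsymbol z}\|_1\lesssim \|\boldsymbol y|_\Omega\|_1$ ``up to DFT normalisation factors.'' That step fails: there is no Parseval identity for an $m\times m$ submatrix of the DFT matrix. The relevant interpolation matrix $A=[e^{i\omega_k x_j}]_{j\in\Omega,\,k\in K}$ is a Vandermonde-type matrix whose smallest singular value can be arbitrarily close to zero (e.g.\ clustered sample points or adjacent frequencies), so $\|\hat{\boldsymbol z}\|=\|A^{-1}\boldsymbol y|_\Omega\|$ admits no bound of the form $C\sqrt{n/m}\,\|\boldsymbol y|_\Omega\|$ that your chain needs in order to land exactly on the threshold $1/\sqrt n$. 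Invertibility ``for generic frequency choices'' (which you flag) is not the issue --- a quantitative lower bound on the singular values is, and obtaining one uniformly is essentially a restricted-isometry statement, far harder than the lemma itself. The single-atom version is worse still, since one atom cannot match $m$ generic values at all.

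The fix is to place the competitor in the \emph{signal} domain rather than the Fourier domain: set $\boldsymbol z|_\Omega=\boldsymbol y|_\Omega$ and $\boldsymbol z|_{\Omega^c}=0$ (zero-pad the samples). Feasibility is immediate, and because $\boldsymbol z$ lives in the full $n$-dimensional space the unitary DFT gives genuine Parseval identities, yielding
\begin{equation}
\|\hat{\boldsymbol z}\|_1 \le \sqrt{n}\,\|\hat{\boldsymbol z}\|_2 = \sqrt{n}\,\|\boldsymbol z\|_2 = \sqrt{n}\,\|\boldsymbol y|_\Omega\|_2 \le \sqrt{n}\,\|\boldsymbol y|_\Omega\|_1 < \|\boldsymbol y\|_2 = \|\hat{\boldsymbol y}\|_2 \le \|\hat{\boldsymbol y}\|_1,
\end{equation}
where the strict inequality is exactly the hypothesis. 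Note that this $\boldsymbol z$ is typically \emph{dense} in Fourier space --- the point of the lemma is not that a sparser interpolant exists, but that when the sampled values are anomalously small in $\ell^1$ (the narrow-gorge situation), even the crude zero-padded vector beats $\boldsymbol y$ in the Basis Pursuit objective. Your framing of the overall strategy (exhibit any feasible point with strictly smaller objective) and your treatment of the edge case $\boldsymbol y|_\Omega=\boldsymbol 0$ are both correct; it is the choice of competitor that needs to change.
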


\begin{proof}
Consider the vector $\boldsymbol{z}$ defined by $\boldsymbol{z}|_{\Omega} = \boldsymbol{y}|_{\Omega}$, $\boldsymbol{z}|_{\Omega^c} = 0$. We claim that whenever the condition \ref{eq:cond} above holds,
\begin{equation}
 	\|\boldsymbol{\hat{z}}\|_1 \le \sqrt{n} \|\boldsymbol{y}|_{\Omega}\|_1 < \|\boldsymbol{y}\|_{2} \le \|\boldsymbol{\hat{y}}\|_1 .
\end{equation}
The first and third inequalities follow from successive application of classic relations between norms and Parseval's identity.
The second inequality in the chain comes from imposing condition \ref{eq:cond}.

We assumed 1D, however the same result can be proven for $d$ dimensions, with the only difference being that now $n$ is the size of the full $d$-dimensional space.
\end{proof}

With this we can prove the following theorem:
\begin{theorem}[Narrow gorge landscapes are unrecoverable (noiseless)]
	\label{thm:gorge_unrec}
	If a family of VQAs indexed by their system size $N$ presents in its cost function a narrow gorge by Definition \ref{def:narrow_gorge} with decay constant $b$, then for sufficiently large $N$ with high probability $\ge 1 - 1/{\normalfont \text{poly}}(N)$ it is not recoverable by Basis Pursuit using the linear program \ref{eq:linprog} for all $b$ if the sampling rate $m \in O(\normalfont \text{poly}(N))$ and the support $n$ scales subexponentially with $N$. The statement is true for $n \sim c^N$ provided that $b > \sqrt{c}$. 
\end{theorem}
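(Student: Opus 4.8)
The plan is to show that, with probability $\ge 1-1/\mathrm{poly}(N)$ over the random choice of sampling points, the hypothesis of Lemma~\ref{lem:recon_fails} is met, so that Basis Pursuit cannot return the true cost function. Concretely, fix the frequency-resolving grid of $n$ points on which the multidimensional DFT inverts $C$ exactly, write $\boldsymbol{C}$ for the vector of cost values on that grid, and let $\Omega$ be the $m$ grid points drawn uniformly at random. By Lemma~\ref{lem:recon_fails} it suffices to establish the inequality $\|\boldsymbol{C}|_{\Omega}\|_1 < \|\boldsymbol{C}\|_2/\sqrt{n}$, so the whole argument reduces to bounding the left-hand side from above and the right-hand side from below.

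First I would bound $\|\boldsymbol{C}\|_2$ from below using the gorge point $\boldsymbol{\theta}^*$. Since $C$ is a trigonometric polynomial whose frequency support is contained in the lattice the grid resolves, Parseval on the frequency-resolving grid gives $\|\boldsymbol{C}\|_2^2 = n\,\|C\|_{L^2}^2$, while the reproducing kernel of that finite-dimensional space of trigonometric polynomials has squared norm $n$, so $|C(\boldsymbol{\theta}^*)|^2 \le n\,\|C\|_{L^2}^2$. Hence $\|\boldsymbol{C}\|_2 \ge |C(\boldsymbol{\theta}^*)| \in \Omega(1/\mathrm{poly}(N))$ by the first defining property of a narrow gorge, so $\|\boldsymbol{C}\|_2/\sqrt{n} \in \Omega\big(1/(\mathrm{poly}(N)\sqrt{n})\big)$.

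Next I would bound $\|\boldsymbol{C}|_{\Omega}\|_1$ from above using the second property, the exponentially small variance. The sampled values $C(\boldsymbol{\theta}_i)$ are i.i.d., with common mean equal to the zero-frequency coefficient $\hat{c}_{\mathbf 0}$ (which the band limit together with the vanishing variance forces to be exponentially small) and variance $\sigma_{\boldsymbol{\theta}}^2[C]\in O(b^{-2N})$. Applying Lemma~\ref{lem:1norm} with $t=\mathrm{poly}(N)$ then yields $\|\boldsymbol{C}|_{\Omega}\|_1 \le (m+t\sqrt{m})\,\sigma_{\boldsymbol{\theta}}[C] \in O(\mathrm{poly}(N)\,b^{-N})$ with probability $\ge 1-1/t^2 \ge 1-1/\mathrm{poly}(N)$. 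Combining the two estimates, the inequality of Lemma~\ref{lem:recon_fails} holds as soon as $\mathrm{poly}(N)\,b^{-N}$ is dominated by $1/(\mathrm{poly}(N)\sqrt{n})$; writing $n\sim c^N$ this is $\mathrm{poly}(N) < (b/\sqrt{c})^N$, which holds for all large $N$ exactly when $b>\sqrt{c}$, and which holds for every $b>1$ whenever $n$ is subexponential in $N$.

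The step I expect to be the main obstacle is making the two exponential rates line up cleanly rather than the individual bounds. One has to pin down how small the bulk value $\hat{c}_{\mathbf 0}$ really is (for instance by arguing from the band limit, or by taking $O$ traceless without loss of generality) so that the mean term $m\,\hat{c}_{\mathbf 0}$ inside $\|\boldsymbol{C}|_{\Omega}\|_1$ does not overwhelm the $O(\mathrm{poly}(N)\,b^{-N})$ bound; one has to check that drawing $\Omega$ with or without replacement does not break the i.i.d.\ hypothesis of Lemma~\ref{lem:1norm} (sampling without replacement only helps the Chebyshev step through negative correlation); and one has to absorb the constants hidden in $\Omega(1/\mathrm{poly}(N))$ and $O(b^{-N})$ into the ``sufficiently large $N$'' clause. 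The passage from one to $d$ dimensions is automatic, since $n$ already stands for the cardinality of the full $d$-dimensional frequency support, exactly as in the statement of Lemma~\ref{lem:recon_fails}.
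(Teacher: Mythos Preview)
Your proposal is correct and follows essentially the same route as the paper: apply Lemma~\ref{lem:1norm} to upper bound $\|\boldsymbol{C}|_{\Omega}\|_1$ via the exponentially vanishing variance, lower bound $\|\boldsymbol{C}\|_2$ from the gorge point, and invoke Lemma~\ref{lem:recon_fails}. Your treatment is in fact more careful than the paper's in two places: you make the reproducing-kernel inequality $|C(\boldsymbol\theta^*)|^2 \le n\|C\|_{L^2}^2$ explicit (the paper simply asserts the lower bound on $\|\boldsymbol{C}_{all}\|_2$ ``directly'' from Parseval, which is not quite enough without a band-limit argument), and you flag the zero-mean hypothesis of Lemma~\ref{lem:1norm}, which the paper does not address (it is handled, as you note, by the remark in Appendix~\ref{app:nonuniform} on subtracting $\langle\boldsymbol{C}_{all}\rangle$).
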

\begin{proof}
Assume that we sample the landscape from a regular grid that is fine enough to resolve the correct Fourier coefficients, if enough samples were to be taken, i.e. at least at the Nyquist rate. Indicate the resulting vector on the entire grid as $\boldsymbol C_{all}$.
The narrow gorge condition (\ref{eq:gorge}) directly implies that $\|\boldsymbol{C_{all}}\|_{2} \in \Omega(1/\text{\normalfont poly}(N))$. This can be seen by applying Parseval's relation twice: $\|\boldsymbol C_{all}\|_2 =  \|\boldsymbol{\hat c}\|_2 = \left(\int |C(\boldsymbol{\theta})|^2 d\boldsymbol\theta \right)^{1/2}$.

Consistently with the notation used in the text, we indicate the vector of values sampled from a random set of size $m$ as $\boldsymbol C$. From Lemma \ref{lem:1norm} one can easily see that if one chooses $t, m \in \text{\normalfont poly}(N)$, and one applies the narrow gorge condition (\ref{eq:var}): $\sigma \in O(b^{-N})$, then with high probability $\ge 1 - \frac{1}{\text{\normalfont poly}(N)}$ one has $\|\boldsymbol{C}\|_1\in O(b^{-N})$. 

Altogether, 
\begin{equation}
	\frac{\|\boldsymbol{C}\|_1 }{\|\boldsymbol{C}_{all}\|_2} \in O(b^{-N}) .
\end{equation}
Since by our assumptions $n \in O(c^{N})$ and $\sqrt{c} < b$, for sufficiently large system sizes $N$ Lemma \ref{lem:recon_fails} holds and the minimiser converges to an incorrect solution.
\end{proof}

\subsection{Proof of result with finite sampling noise (Theorem \ref{thm:gorge_unrec_finite})}

Similarly to before, we make use of a lemma on the 2-norm of a vector of random variables:
\begin{lemma}
	\label{lem:2norm}
	With probability $\ge 1 - \frac{1}{t^2}$, the 2-norm of a vector $\boldsymbol{y}$ of $m$ iid random variables obeying $\mathbb{E}(y_i) = 0, \text{Var}(y_i) = \sigma^2$ obeys
	\begin{equation}
		\|\boldsymbol{y}\|_2 \le (1 + t)\sqrt{m}\sigma.
	\end{equation}
\end{lemma}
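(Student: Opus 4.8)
The plan is to mirror the proof of Lemma~\ref{lem:1norm}, replacing the first-moment analysis of $\|\boldsymbol{y}\|_1$ with a second-moment analysis of $\|\boldsymbol{y}\|_2$. The key observation is that the expectation of $\|\boldsymbol{y}\|_2^2 = \sum_{i=1}^m y_i^2$ is pinned down exactly by the hypotheses: since $\mathbb{E}(y_i) = 0$ and $\text{Var}(y_i) = \sigma^2$, we have $\mathbb{E}(y_i^2) = \sigma^2$, hence $\mathbb{E}(\|\boldsymbol{y}\|_2^2) = m\sigma^2$ by linearity (independence is not even needed for this step).

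First I would control the mean of $\|\boldsymbol{y}\|_2$ itself: by Jensen's inequality applied to the concave square root, $\mathbb{E}(\|\boldsymbol{y}\|_2) \le \sqrt{\mathbb{E}(\|\boldsymbol{y}\|_2^2)} = \sqrt{m}\,\sigma$. Next I would bound its variance by $\text{Var}(\|\boldsymbol{y}\|_2) = \mathbb{E}(\|\boldsymbol{y}\|_2^2) - \mathbb{E}(\|\boldsymbol{y}\|_2)^2 \le m\sigma^2$, using that the subtracted term is nonnegative. Then Chebyshev's inequality applied to $\|\boldsymbol{y}\|_2$ gives that, with probability at least $1 - 1/t^2$,
\begin{equation}
\left| \|\boldsymbol{y}\|_2 - \mathbb{E}(\|\boldsymbol{y}\|_2) \right| \le t\sqrt{\text{Var}(\|\boldsymbol{y}\|_2)} \le t\sqrt{m}\,\sigma .
\end{equation}
On this event, the triangle inequality yields $\|\boldsymbol{y}\|_2 \le \left| \|\boldsymbol{y}\|_2 - \mathbb{E}(\|\boldsymbol{y}\|_2) \right| + \mathbb{E}(\|\boldsymbol{y}\|_2) \le t\sqrt{m}\,\sigma + \sqrt{m}\,\sigma = (1+t)\sqrt{m}\,\sigma$, which is the claim.

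There is no substantive obstacle here: the argument is a routine second-moment estimate, and the only real choice is which tail bound to invoke. Applying Markov's inequality directly to the nonnegative random variable $\|\boldsymbol{y}\|_2^2$ would already give $\textbf{Pr}(\|\boldsymbol{y}\|_2 \ge t\sqrt{m}\sigma) = \textbf{Pr}(\|\boldsymbol{y}\|_2^2 \ge t^2 m\sigma^2) \le 1/t^2$, a slightly stronger conclusion with $t$ in place of $1+t$. I would nonetheless present the Chebyshev version so that the statement and proof parallel Lemma~\ref{lem:1norm} and feed into the subsequent finite-noise argument (Theorem~\ref{thm:gorge_unrec_finite}) in exactly the same form.
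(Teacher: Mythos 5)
Your proof is correct and follows essentially the same route as the paper's: compute $\mathbb{E}(\|\boldsymbol{y}\|_2^2)=m\sigma^2$, bound the mean and variance of $\|\boldsymbol{y}\|_2$ by $\sqrt{m}\sigma$ and $m\sigma^2$ respectively, and apply Chebyshev. Your use of Jensen's inequality is a cleaner way to reach the same extremal bounds that the paper obtains via its explicit $\Delta$-parametrisation, and your side remark that Markov applied to $\|\boldsymbol{y}\|_2^2$ already gives the sharper constant $t$ is accurate.
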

\begin{proof}
We start from
\begin{equation}
	\mathbb{E}(\|\boldsymbol y\|^2_2) = m \mathbb{E}(y_i^2) = m\sigma^2
\end{equation}
Now use the same inequality as before, this time taking $u = \frac{\|\boldsymbol y\|^2_2}{m\sigma^2}$, we get:
\begin{equation}
	\mathbb{E}(\|\boldsymbol y\|_2) = \sqrt{m}\sigma(1 - \Delta)
\end{equation}
Now with $0 \le \Delta \le \frac{1}{2m\sigma^4}\text{Var}(y_i^2)$.
Also, 
\begin{equation}
	\text{Var}(\|\boldsymbol y\|_2) = \mathbb{E}(\|\boldsymbol y\|^2_2) - \mathbb{E}^2(\|\boldsymbol y\|_2) = m\sigma^2 - m\sigma^2(1 - \Delta)^2
\end{equation}

Taking the extremal values $\mathbb{E}(\|\boldsymbol y\|_) = \sqrt{m}\sigma,\; \text{Var}(\|\boldsymbol y\|_1) = m\sigma^2$ the lemma follows from Chebyshev's inequality.
\end{proof}

\begin{customthm}{4}[Narrow gorge landscapes are unrecoverable (finite sampling noise)]
	If a family of $N$ qubit VQAs presents in its cost function a narrow gorge with any decay constant $b$, then for sufficiently large $N$ with high probability $\ge 1 - 1/{\normalfont \text{poly}}(N)$ it is not recoverable by BPDN if it is sampled on a quantum computer at $m \in O(\text{\normalfont poly}(N))$ randomly chosen points with at most polynomially many shots (measurements) per sample. 
\end{customthm}

\begin{proof}
	Suppose that, for each sampled point, $n_s \in O(\text{\normalfont poly}(N))$ shots are taken on the quantum computer. Then this leads to an error term estimate due to finite sampling noise which is:
	\begin{equation}
		\epsilon \propto \sqrt{\frac{1}{n_s}} \in \Omega(1/\text{\normalfont poly}(N))
	\end{equation}
	From Lemma \ref{lem:2norm}, if one chooses $t, m \in \text{\normalfont poly}(N)$, and one applies the narrow gorge condition (\ref{eq:var}): $\sigma \in O(b^{-N})$, then with high probability $\ge 1 - \frac{1}{\text{\normalfont poly}(N)}$ one has 
	\begin{equation}
		\|\boldsymbol{C}\|_2\in O(b^{-N}). 
	\end{equation}
	Therefore, the incorrect solution $\boldsymbol{\hat c}_{err} = \boldsymbol 0$ will be valid as for sufficiently large system sizes since $\|\boldsymbol{\Phi}\boldsymbol{\hat c}_{err} - \boldsymbol{C}\|_2 = \|\boldsymbol{C}\|_2 < \epsilon$.
\end{proof}

\section{Non-periodic cost functions}
\label{app:nonuniform}
In proving some of the results in this paper we have assumed that the cost functions are periodic, which is also implied by the term trigonometric polynomial. However, in some applications it may be necessary to consider non-periodic cost functions, for instance in QAOA where the problem Hamiltonian has coefficients which are not integer multiples of each other. In that case, Theorem \ref{thm:1} dictates that the frequencies, while still discrete, will not be uniformly separated, thus leading to a non-periodic function. In fact, under the assumption that the frequencies composing the function are discrete, these two conditions (non-periodicity and non-uniform frequency spectrum) can be seen to be entirely equivalent.

Using a non-uniform frequency spectrum leads to some modifications of the results. From Fourier analysis, it can be seen that the Nyquist sampling rate is still determined by the largest frequency present, however the minimal sampling range is now dictated by the size of the smallest interval between any two frequencies. A large range will thus be needed to resolve two modes that are close in Fourier space.
Where a complete sampling is available, FFT should be replaced with one of its various non-uniform implementations, which retains similar efficiency \cite{boyd1992fast, dutt1993fast, dutt1993fast, ruiz2018nonuniform}.
For sparse function reconstruction, Basis Pursuit and BPDN will largely remain valid as both techniques are not exclusive to Fourier basis but may be applied to arbitrary bases \cite{chen2001atomic}, however the relation between required number of sample points and sparsity (Equation \ref{eq:sampling_req}) may now no longer hold as it is specific to the uniform frequencies case.

Finally, the results on narrow gorges would hold by replacing $\boldsymbol{C}$ with $\boldsymbol{C} - \<\boldsymbol{C}_{all}\>$ in the definition and theorems, where now the vector of all samples $\boldsymbol{C}_{all}$ covers the minimum range as described above.

\end{document}